\theoremstyle{plain}
\newtheorem{theorem}{Theorem}
\newtheorem{proposition}{Proposition}
\newtheorem{corollary}{Corollary}
\theoremstyle{definition}
\newtheorem{definition}{Definition}
\newtheorem{condition}{Condition}
\newcommand{\bn}{\mathbf{n}}
\def\RR{\mathbf{R}}
\def\lra{\longrightarrow}
\def\a{\alpha}
\def\ga{\gamma}
\def\la{\lambda}
\def\e{\varepsilon}
\def\si{\sigma}
\def\ph{\varphi}
\def\om{\omega}
\def\Ga{\Gamma}
\def\La{\Lambda}
\DeclareMathOperator\rank{rank}
\DeclareMathOperator\const{const}
\DeclareMathOperator\tr{tr}
\DeclareMathOperator*{\var}{var}
\DeclareMathOperator{\ind}{ind}
\newcommand{\I}{{\overline{I}}}
\def\cF{{\mathcal{F}}}
\def\cI{{\mathcal{I}}}
\def\cJ{{\mathcal{J}}}
\def\ov{\overline}
\def\wt{\widetilde}
\def\wh{\widehat}
\def\pa{\partial}
\newcommand\pd[2]{\frac{\pa #1}{\pa #2}}
\let\rom\textup
\newcommand{\BL}{\biggl}
\newcommand{\BR}{\biggr}
\newcommand{\Bl}{\Bigl}
\newcommand{\Br}{\Bigr}
\newcommand{\bl}{\bigl}
\newcommand{\br}{\bigr}
\newcommand{\abs}[1]{\lvert#1\rvert}
\newcommand{\1}[1]{\overset1{#1}}
\begin{document}

\author{S.~Yu.~Dobrokhotov}
\address{A.~Ishlinsky Institute for Problems in Mechanics,
Moscow; Moscow Institute of Physics and Technology, Dolgoprudny,
Moscow District}
\email{dobr@ipmnet.ru}
\thanks{Supported by RFBR (grant~11-01-00973-a) and by the
Archimedes Center for Modeling, Analysis \& Computation (ACMAC),
Crete, Greece (grant FP7-REGPDT-2009-1). S.~Yu.~D. and V.~E.~N. are
grateful to the staff of ACMAC and the Department of Applied
Mathematics, University of Crete for support and kind hospitality.}

\author{G.~Makrakis}
\address{Department of Applied
Mathematics, University of Crete; Institute of Applied \&
Computational Mathematics, Foundation for Research and
Technology-Hellas, Heraklion, Crete, Greece}
\email{makrakg@iacm.forth.gr}

\author{V.~E.~Nazaikinskii}
\address{A.~Ishlinsky Institute for Problems in Mechanics,
Moscow; Moscow Institute of Physics and Technology, Dolgoprudny,
Moscow District}
\email{nazay@ipmnet.ru}

\title[A new  representation of Maslov's canonical operator]%
{Fourier integrals and a new representation\\ of Maslov's canonical
operator near caustics}

\dedicatory{Dedicated to Vladimir Aleksandrovich Marchenko}

\begin{abstract}
We suggest a new representation of Maslov's canonical operator in a
neighborhood of the caustics using a special class of coordinate
systems (``eikonal coordinates'') on Lagrangian manifolds.
\end{abstract}

\subjclass[2010]{81Q20 (Primary); 35S30 (Secondary)}

\maketitle

\setcounter{tocdepth}{2}

\section*{Introduction}

Rapidly oscillating Fourier type integrals are well known in the mathematical literature. One of the main constructions in this area is given by
Maslov's canonical operator \cite{Mas1} (see also \cite{MaFe1,MSS1}
and the bibliography therein), which is used to construct short-wave
(high-frequency, or rapidly oscillating) asymptotic solutions of a
broad class of problems for differential equations with real
characteristics.\footnote{There is also a version of the canonical
operator for equations with complex characteristics (see
\cite{Mas6,Mas3} and also \cite{BelDob92,MaNa2,MSS2}), which we do
not discuss here.} The asymptotics provided by the canonical
operator are a far-reaching generalization of ray expansions in
problems of optics and electrodynamics as well as of the WKB
asymptotics for equations of quantum mechanics. The construction of
the canonical operator is based on the fundamental geometric notion
of a \textit{Lagrangian manifold}. Assume that the original partial
differential equation is defined on the $n$-dimensional
configuration space $\mathbb{R}^n_x$ with coordinates
$x=(x_1,\ldots,x_n)$. One of the main ideas in the canonical
operator is to proceed from this equation to a simpler (in fact,
ordinary) differential equation naturally induced on an
$n$-dimensional Lagrangian manifold $\Lambda$ in the phase space
$\mathbb{R}^{2n}_{(x,p)}$ with coordinates $(x,p)$,
$x=(x_1,\ldots,x_n)$. The manifold $\La$ depends on the problem
considered and is usually constructed by solving the canonical
equations of classical (Hamiltonian) mechanics. Once we have found
an appropriate manifold $\La$ and a solution $\phi$ (which is called
an \textit{amplitude}) of the induced differential equation on
$\La$ (the choice of a specific solution depends on the original
problem as well), we can write out the (asymptotic) solution of the
original problem in the form
\begin{equation}\label{KO}
u(x)=[K_{\Lambda} a](x),
\end{equation}
where $K_{\Lambda}$ is the canonical operator. Note that
\begin{itemize}
    \item There exist known recipes or algorithms for
constructing Lagrangian manifolds and amplitudes for many types of
problems (and for various original differential equations).
    \item Formula \eqref{KO} is the \textit{answer to
the original problem}, automatically including the behavior in
caustic regions, passage through the caustics, matching of the
asymptotic expansions in various regions, etc.
\end{itemize}
The standard construction \cite{Mas1} of the canonical operator is
universal, but it is not the only possible one. For a broad class
of interesting problems, one can more conveniently use different
representations, and it is these new representations of oscillating
solutions that are considered in our paper. We point out that our
analysis does not alter the general \textit{concept} of the
canonical operator or the \textit{fundamental geometric objects}
underlying its construction. We just suggest a new implementation
of the canonical operator in a neighborhood of the caustics, which
can be more convenient when solving specific physical problems with
the use of software like \textsl{Wolfram
Mathematica}$^\circledR$~\cite{Mathematica} or
\textsl{MatLab}$^\circledR$~\cite{MatLab}. Note also that our
formulas are a special case of the general formulas of the theory
of Fourier integral operators~\cite{Hor6}, and our main result is
the specific form of these formulas and an algorithm for their
construction. Finally, note that our formulas may in particular be
useful in problems related to the asymptotics of solutions of wave
type equations with localized initial data (e.g., see
\cite{DShTMZ,DShT}) or in scattering problems and point source
problems for equations of Helmholtz type.

The outline  of the paper is as follows. The standard construction
of the canonical operator is described in Section~\ref{s1}. In
Section~\ref{s2}, we describe the main result, our new formulas~\eqref{e05} and~\eqref{e06}.
Section~\ref{s3} provides some examples. Section~\ref{s4} describes
the relationship, much discussed at a certain time in the past,
between the canonical operator and Fourier integrals. Finally,
Section~\ref{s6} contains auxiliary material: it describes some
notation used in the paper and also, for the reader's convenience,
reproduces the famous theorem of the stationary phase method. We
omit the proofs (which are mostly technical and involve lengthy
computations) everywhere except for Section~\ref{s4}, where we feel
that the short proofs provided might be of interest.

The asymptotics discussed here have the small parameter $h\to0$, as
is customary in the semiclassical approximation in quantum
mechanics. In wave problems, one often uses the large parameter
$k\to\infty$. To make the formulas fit this case, one should just
set $h=1/k$.

\section{Standard construction of the canonical operator}\label{s1}

In this section, we recall the construction of Maslov's canonical
operator according to~\cite{Mas1,MaFe1,MSS1}. Let us start from a
very brief overview and then fill in the details.

The input elements of the construction are as follows:
\begin{itemize}
    \item A \textit{Lagrangian manifold}
$\Lambda$ in the $2n$-dimensional phase space $\RR_{(x,p)}^{2n}$.
    \item A \textit{measure} $d \mu$ on $\Lambda$.
    \item A point $\alpha_0\in\Lambda$, referred to as the
    \textit{central point}.
\end{itemize}
If the \textit{quantization conditions} are satisfied, then these
elements uniquely (modulo lower-order terms) determine the
canonical operator $K_{(\La,d\mu)}^{1/h}$, which takes each
function $ a\in C_0^\infty(\La)$ to a rapidly oscillating function
$u(x,h)= [K_{(\La,d\mu)}^{1/h} a](x)$ on the configuration space
$\RR_x^n$. The construction is essentially \textit{local}: first,
$[K_{(\La,d\mu)}^{1/h} a](x)$ is defined for functions $ a$
supported in certain open sets called the \textit{canonical charts}
on $\La$; then a partition of unity on $\La$ is used to paste the
local definitions together into the \textit{global canonical
operator}. The local expression for $[K_{(\La,d\mu)}^{1/h} a](x)$
in a canonical chart has the form of a rapidly oscillating
exponential in the simplest case where the chart is
diffeomorphically projected onto a domain in the configuration
space $\RR_x^n$ (a \textit{nonsingular chart}); in a
\textit{singular chart} (a chart containing a \textit{focal
point}), the local expression has the form of the Fourier transform
of a rapidly oscillating exponential with respect to part of the
variables. A change of the central point $\alpha_0$ results in the
multiplication of all local expressions by a unimodular phase
factor. In general, the local expressions depend not only on the
position of the central point but also on the choice of paths from
the central point to the respective canonical charts. The role of
the quantization condition is that it guarantees that the local
expressions for the canonical operator are independent of the
choice of these paths and coincide with each other on functions
supported in intersections of canonical charts.

Note that although the canonical operator is an object of function
theory significant in its own right, its main applications are
related to various problems for partial differential equations,
hence the importance of the \textit{commutation formula}, which
shows how a differential operator acts on the function
$[K_{(\La,d\mu)}^{1/h} a](x)$ and provides conditions (in the form
of a geometric condition on $\La$ and an ordinary differential
equation for $ a$) ensuring that this function is a solution of
the corresponding differential equation.

Now let us proceed to more detailed explanations.

\subsection{Lagrangian manifold, measure, and central point}

First, let us discuss the input elements of the construction. A
\textit{Lagrangian manifold} in $\RR_{(x,p)}^{2n}$ is an
$n$-dimensional submanifold $\La\subset\RR_{(x,p)}^{2n}$ such that
the symplectic form $\om^2=dp\wedge dx\equiv dp_1\wedge
dx_1+\dotsm+dp_n\wedge dx_n$ vanishes on the vectors tangent to
$\La$. We denote the points of $\La$ by the letter $\a$ and use the
notation $\a=(\a_1,\dotsc,\a_n)$ for various local coordinate
systems on $\La$. Then the embedding $\La\subset\RR_{(x,p)}^{2n}$
is given by equations of the form $x=X(\a)$, $p=P(\a)$, $\a\in\La$.

Next, a \textit{measure} on $\La$ is understood as a volume
form\footnote{Thus, $\La$ is orientable; the theory may pretty well
be constructed without this assumption, which we only make to
simplify the exposition by avoiding the notion of \textit{odd}
differential forms.} (a nonvanishing differential $n$-form) $d\mu$.
In local coordinates $(\a_1,\dotsc,\a_n)$, one has $d\mu=\mu
(\a)\,d\a_1\wedge\dotsm\wedge d\a_n$, where the function $\mu
(\a)\ne0$ is called the \textit{density} of $d\mu$ in these
coordinates.

Finally, a \textit{central point} is an arbitrarily chosen point
$\a_0\in\La$. (We assume $\La$ to be connected; otherwise, we need
one central point per connected component.)

\subsection{Regular and focal points. Canonical
coordinates on $\La$}

Let $\a_*\in\La$. If $\det\pd X\a(\a_*)\ne0$ (this condition is
independent of the choice of local coordinates), then the point
$\a_*$ is said to be \textit{regular}; otherwise, it is said to be
\textit{singular}, or \textit{focal}. If $\a_*$ is a regular point,
then the equation $x=X(\a)$ can be solved for $\a$ in a
neighborhood of the point $(x_*,\a_*)$, $x_*=X(\a_*)$, and hence
the variables $x=(x_1,\dotsc,x_n)$ can be used as local coordinates
on $\La$ in a neighborhood of~$\a_*$. For an arbitrary
$\a_*\in\La$, the \textit{lemma on local coordinates}~\cite{Arnold}
states that there exists a subset $I\subset\{1,\dots,n\}$ such that
$\det\pd{(X_I,P_\I)}\a(\a_*)\ne0$ (where $\I=\{1,\dots,n\}\setminus
I$ is the complementary subset); consequently, the equations
$x_I=X_I(\a)$, $p_\I=P_\I(\a)$ can be solved for $\a$ in a
neighborhood of the point $(x_{I*},p_{\I*},\a_*)$,
$x_{I*}=X_I(\a_*)$, $p_{\I*}=P_\I(\a_*)$, and the variables
$(x_I,p_\I)$ can be used as local coordinates on $\La$ in a
neighborhood of~$\a_*$. These coordinates are called
\textit{canonical coordinates}. For a regular point $\a_*$, we can
of course take $\I=\varnothing$, so that $(x_I,p_\I)=x$, but if
$\a_*$ is a focal point, then $\I$ is necessarily nonempty.

It follows from the preceding that there exists a \textit{canonical
atlas} of $\La$ in which every chart is given for some
$I\subset\{1,\dots,n\}$ by the canonical coordinates $(x_I,p_\I)$
defined on an open connected simply connected subset $U\subset\La$;
such a chart is denoted by $(U,I)$ and called a \textit{canonical
chart} (\textit{nonsingular} if $\I=\varnothing$ and
\textit{singular} otherwise). The equations of $\La$ in the
canonical chart have the form
\begin{equation}\label{loc-coor0}
    x_\I=X_\I(x_I,p_\I),\quad p_I=P_I(x_I,p_\I),
\end{equation}
(where for brevity we write $X_\I(x_I,p_\I)$ instead of
$X_\I(\a(x_I,p_\I))$ etc.).

\subsection{Maslov index of paths and cycles on $\La$}

Let $\e\ge0$ be a given number. The form $d(X_1-i\e
P_1)\wedge\dotsm\wedge d(X_n-i\e P_n)$ is a differential form of
maximal degree $n$ on $\La$ and hence a multiple of $d\mu$; thus,
the function
\begin{equation}\label{Je}
\cJ^{\e}(\a)=\frac{d(X_1-i\e P_1)\wedge\dotsm\wedge d(X_n-i\e
P_n)}{d\mu}
\end{equation}
is well defined; it is called the \textit{Jacobian} of the
functions $(X_1-i\e P_1,\dotsc, X_n-i\e P_n)$ \textit{with respect
to the measure} $d\mu$ and can be computed by the formula
\begin{equation}\label{Je1}
\cJ^{\e}(\a)=\frac{1}{\mu(\a)}
  \det\pd{(X_1-i\e P_1,\dotsc,X_n-i\e P_n) }{(\a_1,\dotsc,\a_n)},
\end{equation}
where $\mu (\a)$ is the density of $d\mu$ in local coordinates
$(\a_1,\dotsc,\a_n)$. It can be shown that \textit{for $\e>0$ the
Jacobian \eqref{Je} vanishes nowhere on $\La$}. On the other hand,
for $\e=0$ the Jacobian~\eqref{Je} becomes the Jacobian
\begin{equation}\label{Jnons}
  \mathcal{J}(\a)=\frac{dX_1\wedge\dotsm\wedge dX_n}{d\mu}
  \equiv\frac{1}{\mu(\a)}\det\pd{(X_1,\dots,X_n)}
  {(\a_1,\dotsc,\a_n)},
\end{equation}
which is nonzero at the regular points and vanishes at the focal
points.

Let $\ga\colon[0,1]\lra\La$ be a path on $\La$ with regular
endpoints $\ga(0)$ and $\ga(1)$. The \textit{Maslov index} of $\ga$
is defined by\footnote{See~\cite{Arnold} for the definition of the
Maslov index of $\ga$ as the intersection number of $\ga$ with the
cycle of singularities on $\La$.}
\begin{equation}\label{MasInd}
\ind\ga=\frac{1}{\pi}\lim_{\varepsilon\to+0}\var_\ga\arg \cJ^\e(\a)
=\frac{1}{\pi i}\lim_{\varepsilon\to+0} \int_\ga
\frac{d\cJ^\e}{\cJ^\e},
\end{equation}
where $\arg z$ is the argument of a complex number $z$ and
$\var_\ga$ stands for the variation along $\ga$. The index
$\ind\ga$ is an integer depending only on the homotopy class of
$\ga$ in the set of paths with regular endpoints. Next, if $\ga$ is
a closed path (\textit{cycle}) on $\La$, then the Maslov index of
$\ga$ is defined by the same formula~\eqref{MasInd} (the limit as
$\e\to+0$ being in fact unnecessary, because the variation of the
argument is independent of $\e$ in this case). The Maslov index of
a cycle is an integer homotopy invariant depending only on the
homology class of the cycle in $H_1(\La)$.

\subsection{Jacobians and Maslov index of canonical charts}\label{s14}

From now on, to avoid unnecessary technical complications, we
assume that \textit{the central point $\a_0$ is nonsingular and
$\cJ(\a_0)>0$}. Let $(U,I)$ be some canonical chart. Then the
Jacobian
\begin{equation}\label{JI}
    \cJ_I(\a)
    =\frac{dX_I\wedge dP_\I}{d\mu}
  \equiv\frac{1}{\mu(\a)}\det\pd{(X_I,P_\I)}
  {(\a_1,\dotsc,\a_n)}
\end{equation}
is nonzero in $U$. (Note that for $\I=\varnothing$ this is just the
Jacobian~\eqref{Jnons}.) The Maslov index $m_{(U,I)}$ of the chart
$(U,I)$ is defined as follows. Choose some path
$\ga\colon[0,1]\lra\La$ with $\ga(0)=\a_0$ and $\ga(1)\equiv\a_*\in
U$. Assume momentarily that $\a_*$ is a nonsingular point. Then
\begin{equation}\label{MasIndCh1}
    m_{(U,I)}=\ind\ga+\frac1\pi
    \left[\arg \frac{dX_I\wedge d((1-\theta)X_I
    -i\theta P_\I)}{d\mu}(\a_*)\right]\bigg|_{\theta=0}^{\theta=1}
    +\frac{\abs{\I}}2,
\end{equation}
where $[\arg(\cdot)]_{\theta=0}^{\theta=1}$ is the variation of the
argument as $\theta$ varies from $0$ to $1$ on the interval
$[0,1]$. If $\a_*$ cannot be assumed to be nonsingular, then one
can use the slightly more cumbersome formula
\begin{equation}\label{MasIndCh2}
    m_{(U,I)}=\frac1\pi
    \var_\ga\arg \frac{d(X_I-i\e(t)P_I)\wedge d(\theta(t)X_I
    -i\varkappa(t) P_\I)}{d\mu}(\a_*)
    +\frac{\abs{\I}}2,
\end{equation}
where $e(t)$, $\theta(t)$, and $\varkappa(t)$ are continuous
functions on $[0,1]$ positive on $(0,1)$ and satisfying the
conditions $\theta(0)=\varkappa(1)=1$ and $\varkappa(0)=
\e(0)=\e(1)=\theta(1)=0$. Note that (i) formulas \eqref{MasIndCh1}
and \eqref{MasIndCh2} agree if $\a_*$ is a nonsingular point; (ii)
$m_{(U,I)}$ is an integer, and $\pi m_{(U,I)}$ is a branch of
$\arg\cJ_I$ in $U$; (iii) $m_{(U,I)}=\ind\ga$ if $\I=\varnothing$
(i.e., $(U,I)$ is a nonsingular chart); (iv) $m_{(U,I)}$
\textit{depends on the choice of the \rom(homotopy class of
the\rom) path} $\ga$ (unless the Maslov index of all cycles on
$\La$ is zero).

\subsection{Action (eikonal) in canonical charts}

Since $\La$ is Lagrangian, it follows that the Pfaff equation
\begin{equation}\label{eq:02-eiko}
    d\tau(\a)=P(\a)\,dX(\a)\equiv
    P_1(\a)\,dX_1(\a)+\dotsm+P_n(\a)\,dX_n(\a)
\end{equation}
is locally solvable on $\La$, and the solution is unique up to an
additive constant. A solution of Eq.~\eqref{eq:02-eiko} is called
an \textit{eikonal} (or \textit{action}). Let $(U,I)$ be a
canonical chart. We define the eikonal in this chart by the formula
\begin{equation}\label{eiko}
    \tau_{(U,I)}(\a)=\int_\ga P(\a)\,dX(\a)
    +\int_{\a_*}^\a P(\a)\,dX(\a),
\end{equation}
where $\ga$ and $\a_*$ are the same as in Sec.~\ref{s14} and the
second integral is taken over an arbitrary path entirely lying in
$U$. The eikonal $\tau_{(U,I)}(\a)$ depends on the choice of the
path $\ga$ (unless the cohomology class of the form $P\,dX$ in
$H^1(\La)$ is trivial).

\subsection{Local canonical operator}

Now we are in a position to write out a formula specifying the
local canonical operator $K_{(U,I)}$ in the canonical chart
$(U,I)$. Let $ a\in C_0^\infty(U)$. Then
\begin{equation}\label{KO-1}
    [K_{(U,I)} a](x,h)=
    \ov\cF_{p_\I\to x_\I}^{1/h}
    \left[\frac{e^{\frac ih[\tau_{(U,I)}(\a)-p_\I X_\I(\a)]
    -i\pi m_{(U,I)}/2}
     a(\a)}{\sqrt{\abs{\cJ_I(\a)}}}\right]_{\a=\a(x_I,p_\I)},
\end{equation}
where $\bar\cF_{p_\I\to x_\I}^{1/h}$ is the inverse $1/h$-Fourier
transform with respect to the variables $p_\I$ (see Sec.~\ref{s6}).
In a nonsingular chart ($\abs{\I}=\varnothing$) the Fourier
transform disappears, and the formula acquires the simpler form
\begin{equation}\label{KO-2}
    [K_{(U,I)} a](x,h)=
    \frac{e^{\frac ih\tau_{(U,I)}(\a)
    -i\pi m_{(U,I)}/2}
     a(\a)}{\sqrt{\abs{\cJ(\a)}}}\bigg|_{\a=\a(x)},
\end{equation}

\subsection{Quantization condition and global canonical operator}

Now assume that the \textit{Bohr--Sommerfeld quantization
conditions}\footnote{These conditions should be understood as
follows: if $\La$ depends on parameters, then, for each $h>0$,
conditions~\eqref{quant-cond} single out the set of admissible
values of these parameters.}
\begin{equation}\label{quant-cond}
 \frac2{\pi h}\oint_{\ga}P(\a)\,dX(\a)\equiv\ind\ga \pmod4
\end{equation}
hold for all cycles $\ga$ on $\La$. (It suffices to require that
\eqref{quant-cond} holds for a basis of independent cycles on
$\La$.)
\begin{theorem}\label{th030}
If the quantization conditions~\eqref{quant-cond} are satisfied,
then the following assertions hold\rom:
\begin{itemize}
    \item The local canonical operators \eqref{KO-1} are
    independent of the choice of the paths~$\ga$
    in Sec.~\rom{\ref{s14}} and coincide modulo $O(h)$ on the
    intersections of the canonical charts.
    \item Let $\{e_{(U,I)}\}$ be a locally finite partition of
unity on $\La$ subordinate to the cover of $\La$ by the canonical
charts $(U,I)$. Define an operator $K_{(\La,d\mu)}^{1/h}$ on
$C_0^\infty(\La)$ by the formula
\begin{equation}\label{global-operator}
   [K_{(\La,d\mu)}^{1/h} a](x,h)=
   \sum_{(U,I)}[K_{(U,I)}(e_{(U,I)} a)](x,h).
\end{equation}
This operator is modulo $O(h)$ independent of the choice of the
canonical atlas and the partition of unity.
\end{itemize}
\end{theorem}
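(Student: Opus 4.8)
The plan is to establish the two assertions in turn, reducing everything to the behavior of the local expressions \eqref{KO-1} under a change of path and under passage between overlapping canonical charts. Throughout I would work with a fixed canonical atlas and postpone the partition-of-unity/atlas-independence statement to the very end, since it follows formally once path-independence and chart-compatibility are in hand.

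\textbf{Step 1: Path-independence of the local operators.} The only path-dependent ingredients of \eqref{KO-1} are $\tau_{(U,I)}$, which appears through $\int_\ga P\,dX$, and $m_{(U,I)}$, which appears through $\ind\ga$ (see remarks (iv) in Sec.~\ref{s14} and the comment after \eqref{eiko}). Replacing $\ga$ by another path $\ga'$ with the same endpoints changes $\int_\ga P\,dX$ by $\oint_{\ga^{-1}\ga'}P\,dX$ and changes $\ind\ga$ by $\ind(\ga^{-1}\ga')$, where $\ga^{-1}\ga'$ is a cycle. By \eqref{quant-cond}, the phase $\tfrac1h\int P\,dX$ then changes by $\pi$ times an integer congruent mod $2$ to $\ind(\ga^{-1}\ga')$ (note $\bmod\,4$ in \eqref{quant-cond} versus the factor $\tfrac2{\pi h}$ means the phase changes by $\pi\,\ind(\ga^{-1}\ga') + 2\pi k$), while the Maslov factor $e^{-i\pi m_{(U,I)}/2}$ changes by $e^{-i\pi\,\ind(\ga^{-1}\ga')/2}$; I would check that $e^{\frac{i}{h}\cdot\pi\,\ind(\ga^{-1}\ga')}\cdot e^{-i\pi\,\ind(\ga^{-1}\ga')/2}\cdot(\text{extra }2\pi\mathbb Z)$ — wait, more carefully: the phase in \eqref{KO-1} is $\tfrac1h\tau - \pi m/2$, and under the path change $\Delta(\tfrac1h\tau) = \tfrac1h\oint P\,dX$ and $\Delta(\tfrac{\pi m}{2}) = \tfrac\pi2\ind$, so the net change in $e^{i(\cdot)}$ is $\exp\!\bigl(i[\tfrac1h\oint P\,dX - \tfrac\pi2\ind]\bigr)$, and \eqref{quant-cond} gives $\tfrac1h\oint P\,dX = \tfrac\pi2\ind + 2\pi k$, so the net factor is $e^{2\pi i k}=1$. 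This is the precise point where the quantization condition is used, and I expect the bookkeeping of the mod-$4$ normalization to be the part most prone to sign errors.

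\textbf{Step 2: Agreement on overlaps.} On a connected component of $U\cap U'$ for two charts $(U,I)$, $(U',I')$, I would show $[K_{(U,I)}a](x,h) = [K_{(U',I')}a](x,h) + O(h)$ for $a$ supported there. Using Step 1, choose the \emph{same} path $\ga$ from $\a_0$ into the overlap for both charts, so that $\tau_{(U,I)}$ and $\tau_{(U',I')}$ differ only by the constant $\int_{\a_*}^{\a} P\,dX$ vs.\ the analogous integral — in fact with a common $\a_*$ in the overlap they coincide as functions of $\a$. The substantive content is then purely local and $h$-dependent: one must verify that the two Fourier representations \eqref{KO-1} (with index sets $\I$ and $\I'$) define the same oscillating function modulo $O(h)$. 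This is done by the stationary-phase lemma reproduced in Sec.~\ref{s6}: writing out $\ov\cF^{1/h}_{p_\I\to x_\I}$ applied to the amplitude of chart $(U,I)$, the stationary points in $p_\I$ are exactly the points of $\La$ lying over $x$, the Hessian of the phase $\tau - p_\I X_\I$ in $p_\I$ is computed from the equations \eqref{loc-coor0} of $\La$, and its signature produces a Maslov-type factor $e^{i\pi\,\mathrm{sgn}/4}$ together with a Jacobian $|\det|^{-1/2}$ that, when combined with $|\cJ_I|^{-1/2}$, reassembles into $|\cJ_{I'}|^{-1/2}$; the phase factors $e^{-i\pi m_{(U,I)}/2}$ and $e^{-i\pi m_{(U',I')}/2}$ differ by precisely the stationary-phase signature contribution — this is guaranteed by the way $m_{(U,I)}$ was defined in \eqref{MasIndCh1}–\eqref{MasIndCh2} as a branch of $\arg\cJ_I$ continued along $\ga$. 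The main obstacle here is the stationary-phase computation relating the signature of the mixed Hessian to the difference $m_{(U,I)} - m_{(U',I')}$; this is exactly the classical Maslov-index cocycle identity, and I would either cite \cite{Mas1,MaFe1,MSS1} or carry out the index computation along a short path inside $U\cap U'$ using \eqref{MasInd}.

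\textbf{Step 3: Atlas- and partition-independence.} Given two canonical atlases and subordinate partitions of unity $\{e_{(U,I)}\}$, $\{e'_{(U',I')}\}$, write $a = \sum_{(U,I),(U',I')} e_{(U,I)}\,e'_{(U',I')}\,a$; each term $e_{(U,I)}e'_{(U',I')}a$ is supported in $U\cap U'$, and by Step 2 (applied in either atlas) the local operators for $(U,I)$ and $(U',I')$ agree on it modulo $O(h)$. Summing over the (locally finite) double sum and using linearity of \eqref{global-operator} gives $[K^{1/h}_{(\La,d\mu)}a] = [K'^{1/h}_{(\La,d\mu)}a] + O(h)$. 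One should note that the $O(h)$ errors accumulate over only finitely many charts on the support of any fixed $a\in C_0^\infty(\La)$, so the total error is still $O(h)$; this is a routine compactness remark. I expect no real difficulty in Step 3 beyond making the finiteness bookkeeping explicit.
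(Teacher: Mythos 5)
The paper itself omits the proof of Theorem~\ref{th030}, stating explicitly that the proofs ``are mostly technical and involve lengthy computations'' and deferring to~\cite{Mas1,MaFe1,MSS1}; so there is no internal proof to compare against, and your proposal has to be judged against the standard argument in those references.

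Your overall plan is the standard one and is correct in its essentials. Step~1 is right: under a change of path $\ga\to\ga'$ the phase in~\eqref{KO-1} changes by
$\tfrac1h\oint P\,dX-\tfrac\pi2\,\ind(\ga^{-1}\ga')$,
and~\eqref{quant-cond} says $\tfrac1h\oint P\,dX=\tfrac\pi2\ind(\ga^{-1}\ga')+2\pi k$, so the net phase factor is $e^{2\pi i k}=1$; this is exactly where the quantization condition enters. One small omission in Step~1: you should also note that the construction in Sec.~\ref{s14} a priori depends on the endpoint $\a_*\in U$ of $\ga$, and invoke remark (ii) after~\eqref{MasIndCh2} (that $\pi m_{(U,I)}$ is a branch of $\arg\cJ_I$ on all of $U$, and similarly $\tau_{(U,I)}$ is a single-valued primitive of $P\,dX$ on $U$) to see that the value is actually independent of $\a_*$, not just of the homotopy class of $\ga$ for fixed $\a_*$.

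In Step~2 there is a real, if repairable, gap. You write out $\ov\cF^{1/h}_{p_\I\to x_\I}$ applied to the amplitude of chart $(U,I)$ and then say ``the stationary points in $p_\I$ are exactly the points of $\La$ lying over $x$.'' That sentence only makes sense if the chart $(U',I')$ you are comparing with is nonsingular, $\I'=\varnothing$; in general the projection to $x$-space is not a diffeomorphism on the overlap, and there may be no nonsingular chart containing the overlap at all. The standard fix (which mirrors exactly what the paper does in the proof of Theorem~\ref{atheorem1}) is to apply the forward transform $\cF_{x_{\I'}\to p_{\I'}}^{1/h}$ to both sides, reducing the desired identity to a stationary-phase evaluation in the combined variables $(x_{\I'},p_\I)$ (after cancelling $\cF\ov\cF=\mathrm{id}$ on $\I\cap\I'$), compared against a pure exponential in $(x_{I'},p_{\I'})$. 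Once you do that, the rest of your Step~2 — the Hessian signature producing the Maslov index increment, the Jacobian recombination $|\cJ_I|^{-1/2}\leadsto|\cJ_{I'}|^{-1/2}$ — is the right computation, and it is indeed the Maslov cocycle identity. Step~3 is routine and correct as stated, with the local-finiteness remark being the only point needing explicit mention.
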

The operator $K_{(\La,d\mu)}^{1/h}$ defined in
\eqref{global-operator} is called \textit{Maslov's canonical
operator} on the Lagrangian manifold $\La$ with measure $d\mu$ and
central point~$\a_0$.

\subsection{Commutation formula and asymptotic solutions}

Consider a differential or pseudodifferential operator $\wh
L=L(\2x,\skew4\1{\wh p},h)\equiv L\Bl(\2x,-\1{ih\pd{}x},h\Br)$ with
smooth symbol $L(x,p,h)=H(x,p)+hL_1(x,p)+\dotsm$. The leading term
$H(x,p)$ of the expansion of $L$ in powers of $h$ is called the
\textit{Hamiltonian}.
\begin{theorem}\label{th51}
Let $ a\in C_0^\infty(\La)$. Then
\begin{equation}\label{e525}
    \wh L K_{(\La,d\mu)}^{1/h}  a
    =K_{(\La,d\mu)}^{1/h}(H|_{\La} a)+O(h),
\end{equation}
where $H|_{\La}=H(X(\a),P(\a))$ is the restriction of $H(x,p)$ to
$\La$. If, moreover, $H|_{\La}\equiv0$ and the measure $d\mu$ is
invariant under shifts along the trajectories of the Hamiltonian
vector field $\xi_H=H_p(x,p)\pa_x-H_x(x,p)\pa_p$, then
\begin{equation}\label{e527}
    \wh LK_{(\La,d\mu)}^{1/h}  a
    =-ihK_{(\La,d\mu)}^{1/h}\bl(\xi_H a -
    \tfrac12(\tr H_{xp})|_{\La} a+iL_1|_{\La} a\br)+O(h^2),
\end{equation}
where $\tr H_{xp}$ is the trace of the matrix $H_{xp}(x,p)$.
\end{theorem}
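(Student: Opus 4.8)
The plan is to exploit the \emph{local} character of the canonical operator: reduce both assertions to a computation in a single canonical chart, where \eqref{e525} and \eqref{e527} become the classical formulas for the action of a (pseudo)differential operator on a rapidly oscillating exponential.

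\textbf{Step 1 (reduction to one chart).} Since $\wh L$ and $K_{(\La,d\mu)}^{1/h}$ are linear and \eqref{global-operator} writes $K_{(\La,d\mu)}^{1/h}a$ as the sum $\sum_{(U,I)}K_{(U,I)}(e_{(U,I)}a)$, finite for each $a\in C_0^\infty(\La)$ by local finiteness of the partition of unity, it suffices to prove, for $a$ supported in one chart $(U,I)$, that $\wh L K_{(U,I)}a=K_{(U,I)}(H|_\La a)+O(h)$ and, under the extra hypotheses, $\wh L K_{(U,I)}a=-ihK_{(U,I)}\bl(\xi_H a-\tfrac12(\tr H_{xp})|_\La a+iL_1|_\La a\br)+O(h^2)$. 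Summing over the atlas then gives \eqref{e525} at once, since $\sum_{(U,I)}K_{(U,I)}(e_{(U,I)}\cdot H|_\La a)=K_{(\La,d\mu)}^{1/h}(H|_\La a)$ by definition; and it gives \eqref{e527} because $\xi_H(e_{(U,I)}a)=e_{(U,I)}\xi_H a+(\xi_H e_{(U,I)})a$ and the cross terms $\sum_{(U,I)}K_{(U,I)}((\xi_H e_{(U,I)})a)$ are $O(h)$ by the agreement of the local canonical operators on chart overlaps (Theorem~\ref{th030}), hence contribute $O(h^2)$ after multiplication by $-ih$. The constant unimodular factor $e^{-i\pi m_{(U,I)}/2}$ in \eqref{KO-1}, \eqref{KO-2} commutes through $\wh L$ and plays no role. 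Note also that when $H|_\La\equiv0$ the field $\xi_H$ is tangent to $\La$: for $v\in T\La$ one has $\om^2(\xi_H,v)=-v(H)=0$, and $T\La$, being Lagrangian, is its own $\om^2$-orthogonal complement; thus $\xi_H a$ is a well-defined function on $\La$.

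\textbf{Step 2 (nonsingular chart, $\I=\varnothing$).} Here \eqref{KO-2} gives $[K_{(U,I)}a](x,h)=e^{i\tau(x)/h}b(x,h)$ with $b=e^{-i\pi m/2}\tilde a/\sqrt{\abs\cJ}$, $\tilde a(x)=a(\a(x))$, and $\nabla\tau(x)=P(\a(x))$. The standard expansion for the action of $\wh L=L(\2x,\1{\wh p},h)$ on a WKB function reads
\[
e^{-i\tau/h}\,\wh L\bl(e^{i\tau/h}b\br)
=H(x,\nabla\tau)\,b
-ih\,\Bl(\langle H_p(x,\nabla\tau),\nabla b\rangle
+\bl(\tfrac12\sum_{j,k}H_{p_jp_k}(x,\nabla\tau)\,\pa_{x_j}\pa_{x_k}\tau+iL_1(x,\nabla\tau)\br)b\Br)+O(h^2).
\]
On the image of $\La$ one has $H(x,\nabla\tau)=H|_\La$, which already proves \eqref{e525}. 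Assume now $H|_\La\equiv0$; the leading term drops and it remains to recognize the $O(h)$ term. The key ingredient is a Liouville identity for $\cJ$: from $dX_1\wedge\dots\wedge dX_n=\cJ\,d\mu$, from $\mathcal L_{\xi_H}d\mu=0$, and from $\mathcal L_{\xi_H}(dX_1\wedge\dots\wedge dX_n)=\bl(\operatorname{div}\sum_jH_{p_j}(x,\nabla\tau)\pa_{x_j}\br)(dX_1\wedge\dots\wedge dX_n)$ one obtains
\[
\xi_H\log\abs\cJ=(\tr H_{xp})|_\La+\sum_{j,k}H_{p_jp_k}(x,\nabla\tau)\,\pa_{x_j}\pa_{x_k}\tau .
\]
Substituting $b$ and using $\langle H_p,\nabla\tilde a\rangle=\xi_H\tilde a$ together with $\langle H_p,\nabla(1/\sqrt{\abs\cJ})\rangle=-\tfrac12(\xi_H\log\abs\cJ)/\sqrt{\abs\cJ}$, the sum over $\pa_{x_j}\pa_{x_k}\tau$ cancels against half of $\xi_H\log\abs\cJ$, leaving exactly $-ih\,e^{i\tau/h}\frac{e^{-i\pi m/2}}{\sqrt{\abs\cJ}}\bl(\xi_H a-\tfrac12(\tr H_{xp})|_\La a+iL_1|_\La a\br)+O(h^2)$, which by \eqref{KO-2} is the claimed $-ihK_{(U,I)}(\,\cdots\,)+O(h^2)$.

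\textbf{Step 3 (singular chart) and the main obstacle.} Conjugate $\wh L$ by the partial Fourier transform $\ov\cF_{p_\I\to x_\I}^{1/h}$, which is an \emph{exact} (error-free) operation: it intertwines $\wh L$ with an operator $\wh L'$ in the variables $(x_I,p_\I)$ whose leading symbol $H'$ is obtained from $H$ by the substitutions $x_\I\mapsto-(\text{dual of }p_\I)$, $p_I\mapsto(\text{dual of }x_I)$, $p_\I\mapsto p_\I$, plus lower-order terms. In these coordinates the chart is nonsingular: $[K_{(U,I)}a]=\ov\cF_{p_\I\to x_\I}^{1/h}(e^{i\Phi/h}\tilde b)$ with $\Phi=\tau-p_\I X_\I$ a generating function (so $\pa\Phi/\pa x_I=P_I$, $\pa\Phi/\pa p_\I=-X_\I$ on $\La$) and $\tilde b=e^{-i\pi m/2}\tilde a/\sqrt{\abs{\cJ_I}}$. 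Applying Step~2 to $\wh L'$ and $e^{i\Phi/h}\tilde b$: the leading term is $H'(x_I,p_\I,P_I,-X_\I)\tilde b=H(x_I,X_\I,P_I,p_\I)\tilde b=H|_\La\tilde b$ (since $x_I=X_I$, $p_\I=P_\I$ on $\La$), and the $O(h)$ term, after the same Liouville cancellation — now with $\cJ_I$ and the projected Hamiltonian field written in the $(x_I,p_\I)$-representation — reproduces the transport operator. This last point is the delicate one and the main obstacle: one must check that the subprincipal/trace contributions transform correctly under $\ov\cF_{p_\I\to x_\I}^{1/h}$, so that the sign changes appearing in $\tr(H'_{x'p'})$ are exactly compensated by the altered form of the divergence term and the \emph{same} coordinate-free operator $\xi_H-\tfrac12(\tr H_{xp})|_\La+iL_1|_\La$ re-emerges regardless of the choice of canonical coordinates $(x_I,p_\I)$ — equivalently, one tracks how the half-density factor $1/\sqrt{\abs{\cJ_I}}$ and the trace term conspire under the change of representation (the metaplectic nature of $\ov\cF^{1/h}$). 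Transforming back through $\ov\cF_{p_\I\to x_\I}^{1/h}$ (the $O(h)$ and $O(h^2)$ remainders, being smooth and compactly supported in $p_\I$, remain $O(h)$ and $O(h^2)$) and summing over the atlas completes the proof.
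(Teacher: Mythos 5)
The paper does not prove Theorem~\ref{th51}: the introduction states that all proofs are omitted except in Section~\ref{s4}, and this is a standard result of Maslov's theory cited from \cite{Mas1,MaFe1,MSS1}. So there is no ``paper's proof'' to compare against; I can only assess your argument on its own terms.

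Your Steps 1 and 2 are sound and give the argument one would expect. The localization to a single chart is correct, and your handling of the cross terms $\sum_{(U,I)}K_{(U,I)}\bl((\xi_H e_{(U,I)})a\br)$ via Theorem~\ref{th030} is legitimate (one can make it fully rigorous by introducing a finer partition of unity so that each $(\xi_H e_{(U,I)})a$ is supported in a common refinement chart, replacing every $K_{(U,I)}$ there by a single reference $K_{(V,J)}$ mod $O(h)$, and then using $\sum_{(U,I)}\xi_H e_{(U,I)}=\xi_H 1=0$). The tangency of $\xi_H$ to $\La$ when $H|_\La\equiv 0$ is correctly justified. The Liouville identity
\[
  \xi_H\log\abs\cJ=(\tr H_{xp})|_\La+\sum_{j,k}H_{p_jp_k}(x,\nabla\tau)\,\tau_{x_jx_k}
\]
indeed follows from $\cJ=1/\mu$ in the $x$-chart together with $\mathcal L_{\xi_H}d\mu=0$, and the cancellation against the $\tfrac12 H_{pp}\tau_{xx}$ term in the WKB expansion is exactly what produces the transport operator $\xi_H-\tfrac12(\tr H_{xp})|_\La+iL_1|_\La$.

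The genuine gap is Step 3, and you say so yourself. You reduce the singular-chart case to ``apply Step 2 in the $(x_I,p_\I)$-representation,'' but Step 2 was derived for a scalar Hamiltonian in the $x$-representation with the \emph{standard $(\2x,\1{\wh p})$-ordering}, and conjugation by the partial Fourier transform does not preserve this ordering: the roles of $x_\I$ and $p_\I$ are swapped, so in the new variables some arguments are differentiated first and others second, which changes the sub\-principal (i.e., $O(h)$) part of the operator even though the principal symbol just transposes. Since \eqref{e527} is precisely a statement about the $O(h)$ term, this is not a cosmetic issue. You correctly identify what must be verified --- that the altered sub\-principal term, the changed form of the divergence in the Liouville identity (now with $\cJ_I$ and the transformed field), and the half-density factor $1/\sqrt{\abs{\cJ_I}}$ conspire to reproduce the \emph{same} invariant transport operator --- but you do not carry out this verification, and naming it ``the main obstacle'' is exactly right: it is the part of the proof where the actual work lies. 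Until that computation is done (e.g., by passing to a symmetric/Weyl ordering where the metaplectic intertwining is exact and the subprincipal symbol is a genuine half-density invariant, or by direct stationary-phase manipulation of \eqref{KO-1} as in \cite{MSS1}), the proof of \eqref{e527} in singular charts is incomplete; the proof of \eqref{e525}, which only needs the principal term, is fine as written.
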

This theorem suggests a natural way for constructing asymptotic
solutions of the equation $\wh Lu=0$: find a Lagrangian manifold
$\La$ with $H|_\La=0$, equip it with an invariant measure $d\mu$,
and solve the \textit{transport equation} $\xi_H a - \tfrac12(\tr
H_{xp})|_{\La} a+iL_1|_{\La} a=0$; the desired solutions have the
form $u=K_{(\La,d\mu)}^{1/h} a$.

\section{New formulas}\label{s2}

Now let us present new formulas for Maslov's canonical operator.
These formulas differ most dramatically from the standard formulas
in the singular charts (although the expression for the nonsingular
charts acquires a slightly different form as well), and they can be
written out provided that the Lagrangian manifold (or at least the
part of it where we intend to use the new formulas) satisfies
Condition~\ref{co:02-1} below. We point out that our formulas give
functions with the same asymptotics as the standard canonical
operator. (See Theorem~\ref{compare}.) Therefore, the counterparts
of Theorems~\ref{th030} and~\ref{th51} hold for the new expression
of the canonical operator automatically, and that is why we do not
even bother to state or mention them in what follows.

Let $\La$ be a Lagrangian manifold in $\RR_{(x,p)}^{2n}$ equipped
with a measure $d\mu$ and an initial point $\a_0$. Throughout this
section, we assume that the quantization condition
\eqref{quant-cond} is satisfied.

\subsection{Main condition and eikonal coordinates}\label{s21}

From now on, we assume that $\La$ satisfies the following
\begin{condition}\label{co:02-1}
The form $P(\a)\,dX(\a)$ is nonzero for each $\a\in\La$.
\end{condition}
Thus, if $\tau$ is an eikonal in a neighborhood $U$ of some point
of $\La$, then $d\tau\ne0$, and hence (provided that $U$ is
sufficiently small) we can supplement $\tau$ with some functions
$\psi_1,\dots,\psi_{n-1}$ such that
$(\tau,\psi)\equiv(\tau,\psi_1,\dots,\psi_{n-1})$ is a coordinate
system in $U$. A coordinate system of this kind will be called an
\textit{eikonal coordinate system}. The expressions of the
functions $(X(\a),P(\a))$ via eikonal coordinates will be denoted
by\footnote{Rather than the technically correct
$(X(\a(\tau,\psi)),P(\a(\tau,\psi)))$.}
$(X(\tau,\psi),P(\tau,\psi))$ or even simply by $(X,P)$ with the
arguments omitted. The same notation will be used for other
functions on $\La$. The symbol $\mu=\mu(\tau,\psi)$ will from now
on be used to denote the density of the measure $d\mu$ in eikonal
coordinates $(\tau,\psi)$, so that
\begin{equation*}
  d\mu=\mu(\tau,\psi)\,d\tau\wedge
       d\psi_1\wedge \dotsm\wedge d\psi_{n-1}.
\end{equation*}

One can readily prove the following assertion.
\begin{proposition}\label{pr:04-01}
In eikonal coordinates, one has the relations
\begin{gather}\label{eq:04-01}
    \langle P,X_\tau\rangle=1,\qquad
    \langle P,X_{\psi_j}\rangle=0,\qquad
               j=1,\dotsc,n-1,
\\ \label{eq:04-02}
    \langle P_{\psi_j},X_\tau\rangle
       =\langle P_\tau, X_{\psi_j}\rangle,\quad
    \langle P_{\psi_j},X_{\psi_k}\rangle
       =\langle P_{\psi_k},X_{\psi_j}\rangle,\quad
               j,k=1,\dotsc,n-1.
\end{gather}
\end{proposition}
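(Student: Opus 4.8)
The plan is to derive all four relations from two facts: the defining Pfaff equation $d\tau=P\,dX=\sum_k P_k\,dX_k$ for the eikonal, and the Lagrangian condition that $\omega^2=dp\wedge dx$ vanishes on $\La$. First I would expand $d\tau$ in the eikonal coordinates: since $\tau$ is itself the first coordinate, $d\tau$ is literally the differential of the first coordinate function, so writing $dX_k=X_{k,\tau}\,d\tau+\sum_j X_{k,\psi_j}\,d\psi_j$ and substituting into $d\tau=\sum_k P_k\,dX_k=\langle P,X_\tau\rangle\,d\tau+\sum_j\langle P,X_{\psi_j}\rangle\,d\psi_j$, I compare coefficients of the independent one-forms $d\tau,d\psi_1,\dots,d\psi_{n-1}$. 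Matching the $d\tau$ coefficient gives $\langle P,X_\tau\rangle=1$ and matching each $d\psi_j$ coefficient gives $\langle P,X_{\psi_j}\rangle=0$; this is exactly \eqref{eq:04-01}.

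For \eqref{eq:04-02} I would pull back the symplectic form. On $\La$ we have $0=\omega^2|_\La=\sum_k dP_k\wedge dX_k$ (this is the coordinate-free statement that $\La$ is Lagrangian, equivalently $d(P\,dX)=0$, which also follows by applying $d$ to the Pfaff equation). Expanding $dP_k=P_{k,\tau}\,d\tau+\sum_j P_{k,\psi_j}\,d\psi_j$ and $dX_k$ similarly and collecting the coefficients of the basis two-forms $d\tau\wedge d\psi_j$ and $d\psi_j\wedge d\psi_k$ ($j<k$): the coefficient of $d\tau\wedge d\psi_j$ is $\langle P_\tau,X_{\psi_j}\rangle-\langle P_{\psi_j},X_\tau\rangle$, which must vanish, giving the first identity in \eqref{eq:04-02}; the coefficient of $d\psi_j\wedge d\psi_k$ is $\langle P_{\psi_j},X_{\psi_k}\rangle-\langle P_{\psi_k},X_{\psi_j}\rangle$, which must vanish, giving the second. (One should be slightly careful that $\sum_k dP_k\wedge dX_k$ indeed restricts to $\sum_k dP_k\wedge dX_k$ computed from the embedding maps $X,P$ — but that is just functoriality of pullback of forms and needs no real argument.)

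I do not expect any serious obstacle here; the proposition is genuinely the kind of "one can readily prove" statement the authors describe. The only mild subtlety worth a sentence in the write-up is bookkeeping: one must be sure that $(d\tau,d\psi_1,\dots,d\psi_{n-1})$ is a genuine coframe — which is guaranteed because $(\tau,\psi)$ is a coordinate system, itself guaranteed by Condition~\ref{co:02-1} via $d\tau\ne0$ — so that comparison of coefficients is legitimate. An alternative, fully equivalent route is to note $\langle P,X_\tau\rangle=P\cdot X_\tau$ equals $\tau_\tau=1$ and $\langle P,X_{\psi_j}\rangle=\tau_{\psi_j}=0$ directly from $\tau$ being the $\tau$-coordinate, and to get \eqref{eq:04-02} by differentiating \eqref{eq:04-01}: differentiating $\langle P,X_\tau\rangle=1$ in $\psi_j$ gives $\langle P_{\psi_j},X_\tau\rangle+\langle P,X_{\tau\psi_j}\rangle=0$, differentiating $\langle P,X_{\psi_j}\rangle=0$ in $\tau$ gives $\langle P_\tau,X_{\psi_j}\rangle+\langle P,X_{\psi_j\tau}\rangle=0$, and subtracting (using $X_{\tau\psi_j}=X_{\psi_j\tau}$) yields $\langle P_{\psi_j},X_\tau\rangle=\langle P_\tau,X_{\psi_j}\rangle$; similarly differentiating $\langle P,X_{\psi_j}\rangle=0$ in $\psi_k$ and antisymmetrizing in $j,k$ gives the last relation, provided one also knows the "extra" symmetry $\langle P_{\psi_j},X_{\psi_k}\rangle=\langle P_{\psi_k},X_{\psi_j}\rangle$ is not circular — which is why the symplectic-form derivation is cleaner and is the one I would present as the main argument.
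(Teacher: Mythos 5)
Your main argument is correct and complete, and since the paper omits the proof of Proposition~\ref{pr:04-01} (it is labeled ``one can readily prove''), there is nothing to compare against: both routes you offer are exactly the sort of computation the authors had in mind. The derivation of \eqref{eq:04-01} by reading off coefficients of $d\tau,d\psi_1,\dots,d\psi_{n-1}$ in $d\tau=\sum_k P_k\,dX_k$ is clean, and the derivation of \eqref{eq:04-02} from $\sum_k dP_k\wedge dX_k=0$ on $\La$ by reading off the coefficients of $d\tau\wedge d\psi_j$ and $d\psi_j\wedge d\psi_k$ is correct, including the signs.

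One small correction to your closing remark: the ``alternative route'' by differentiating \eqref{eq:04-01} is in fact not circular and needs no outside input. Differentiating $\langle P,X_{\psi_j}\rangle=0$ in $\psi_k$ gives $\langle P_{\psi_k},X_{\psi_j}\rangle=-\langle P,X_{\psi_j\psi_k}\rangle$, and swapping $j$ and $k$ gives $\langle P_{\psi_j},X_{\psi_k}\rangle=-\langle P,X_{\psi_k\psi_j}\rangle$; equality of mixed partials then yields $\langle P_{\psi_j},X_{\psi_k}\rangle=\langle P_{\psi_k},X_{\psi_j}\rangle$ directly. So both routes are self-contained, and indeed they must be equivalent: the Pfaff equation $d\tau=P\,dX$ already encodes the Lagrangian condition via $d(P\,dX)=0$, so differentiating \eqref{eq:04-01} and pulling back $\omega^2$ are two ways of writing the same identity.
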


\subsection{Canonical operator in a nonsingular
chart}\label{sec:04-04}

In a nonsingular chart $(U,I)$, $\I=\varnothing$, one still uses
formula~\eqref{KO-2}. The only refinement is that now we have
eikonal coordinates $(\tau,\psi)$, where $\tau=\tau_{(U,I)}$,
instead of the general coordinates $\a$ in $U$, and so we can
partly compute the Jacobian $\cJ$ in the eikonal coordinates using
Proposition~\ref{pr:04-01}. Namely, the following assertion holds.
\begin{proposition}\label{pr:02-02}
In the eikonal coordinates, one has
\begin{equation}\label{eq:04-03}
  \abs{\cJ(\tau,\psi)}=\frac{\sqrt{\smash[b]
  {\det(X_\psi^*(\tau,\psi)X_\psi(\tau,\psi))}}}
  {\abs{\mu(\tau,\psi)}\abs{P(\tau,\psi)}},
\end{equation}
where $X_\psi^*X_\psi=\bl(\langle X_{\psi_j},X_{\psi_k}\rangle\br)$
is the Gram matrix of the vectors $X_{\psi_1},\dotsc,
X_{\psi_{n-1}}$.
\end{proposition}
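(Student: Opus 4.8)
The plan is to compute the Jacobian $\cJ$ directly from its definition~\eqref{Jnons} in the eikonal coordinate system $(\tau,\psi_1,\dots,\psi_{n-1})$, exploiting the orthogonality relations~\eqref{eq:04-01}. By definition, $\abs{\cJ(\tau,\psi)}=\abs{\det(X_\tau,X_{\psi_1},\dots,X_{\psi_{n-1}})}/\abs{\mu(\tau,\psi)}$, where the determinant is that of the $n\times n$ matrix whose columns are the listed vectors in $\RR^n_x$. So everything reduces to evaluating $\abs{\det(X_\tau,X_\psi)}$, where $X_\psi$ denotes the $n\times(n-1)$ matrix with columns $X_{\psi_j}$.

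First I would observe that Proposition~\ref{pr:04-01} gives $\langle P,X_\tau\rangle=1$ and $\langle P,X_{\psi_j}\rangle=0$ for all $j$; that is, $P$ is orthogonal (in the standard Euclidean inner product on $\RR^n$) to every column of $X_\psi$, and $P$ has a nonzero pairing with $X_\tau$. The key geometric point is therefore that $X_\tau$ has a nonzero component along $P/\abs P$ orthogonal to the span of the columns of $X_\psi$, and it is precisely that component that contributes to the volume spanned by all $n$ columns. Quantitatively, the volume of the parallelepiped spanned by $X_\tau,X_{\psi_1},\dots,X_{\psi_{n-1}}$ equals the volume of the base parallelepiped spanned by $X_{\psi_1},\dots,X_{\psi_{n-1}}$ (which is $\sqrt{\det(X_\psi^*X_\psi)}$, the Gram determinant) times the height, i.e.\ the length of the component of $X_\tau$ orthogonal to $\operatorname{span}(X_{\psi_j})$. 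Since $X_{\psi_j}\perp P$, that orthogonal complement contains the line $\RR P$, and the component of $X_\tau$ in the direction $P/\abs P$ has length $\abs{\langle X_\tau,P\rangle}/\abs P=1/\abs P$ by~\eqref{eq:04-01}. One still has to rule out that the orthogonal complement of $\operatorname{span}(X_{\psi_j})$ is larger than $\RR P$; but that complement has dimension $n-(n-1)=1$ whenever the $X_{\psi_j}$ are linearly independent, and they are, because $\cJ\ne0$ on a nonsingular chart forces $X_\tau,X_{\psi_1},\dots,X_{\psi_{n-1}}$ to be a basis. Hence the height is exactly $1/\abs P$, giving $\abs{\det(X_\tau,X_\psi)}=\sqrt{\det(X_\psi^*X_\psi)}/\abs P$, and dividing by $\abs\mu$ yields~\eqref{eq:04-03}.

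A cleaner way to package the same computation, and the one I would actually write out, is via the Gram determinant of the full $n\times n$ matrix $A=(X_\tau\mid X_\psi)$: one has $\abs{\det A}^2=\det(A^*A)$, and $A^*A$ is the block matrix with entries $\langle X_\tau,X_\tau\rangle$, the row and column $\langle X_\tau,X_{\psi_j}\rangle$, and the block $X_\psi^*X_\psi$. This does not immediately simplify, so instead I would first replace $X_\tau$ by $X_\tau':=X_\tau-\sum_j c_j X_{\psi_j}$, choosing the $c_j$ so that $X_\tau'\perp X_{\psi_k}$ for all $k$ (this is the Gram--Schmidt step; it is solvable precisely because $X_\psi^*X_\psi$ is invertible). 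Column operations of this kind do not change $\det A$, so $\abs{\det A}=\abs{\det(X_\tau'\mid X_\psi)}$, and now $A'^*A'$ is block-diagonal, giving $\abs{\det A}^2=\abs{X_\tau'}^2\det(X_\psi^*X_\psi)$. Finally, since $\langle P,X_{\psi_j}\rangle=0$, the vector $P$ lies in the one-dimensional orthogonal complement of $\operatorname{span}(X_{\psi_j})$, which also contains $X_\tau'$; hence $X_\tau'$ is a scalar multiple of $P$, and pairing with $P$ and using $\langle P,X_\tau\rangle=1$, $\langle P,X_{\psi_j}\rangle=0$ gives $\langle P,X_\tau'\rangle=1$, so $X_\tau'=P/\abs P^2$ and $\abs{X_\tau'}=1/\abs P$. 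Substituting, $\abs{\det A}=\sqrt{\det(X_\psi^*X_\psi)}/\abs P$, and~\eqref{eq:04-03} follows after dividing by $\abs\mu$.

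The main (minor) obstacle is the bookkeeping around nondegeneracy: one must be careful to note that on a nonsingular chart $X_{\psi_1},\dots,X_{\psi_{n-1}}$ are automatically linearly independent (so the Gram matrix $X_\psi^*X_\psi$ is positive definite and the Gram--Schmidt coefficients exist) — this is forced by $\cJ\ne0$, since $\cJ$ is, up to the nonvanishing factor $1/\mu$, the determinant of $(X_\tau\mid X_\psi)$. Everything else is linear algebra; there is no analytic difficulty, and the relations~\eqref{eq:04-01} from Proposition~\ref{pr:04-01} do all the real work by pinning down $P$ as a (nonzero, explicitly normalized) vector spanning the orthogonal complement of the $\psi$-directions.
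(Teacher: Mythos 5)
Your proof is correct. The paper states that it omits the proofs of the propositions in Section~2 as ``mostly technical,'' so there is nothing to compare against, but your argument fills the gap cleanly and is almost certainly the intended one: it relies on exactly the orthogonality relations~\eqref{eq:04-01} that the authors single out in Proposition~\ref{pr:04-01} immediately before invoking them. The Gram--Schmidt packaging in your second paragraph is the right way to make the ``base times height'' picture rigorous, the identity $\abs{\det A}^2=\det(A^*A)$ and the column-operation invariance of the determinant are used correctly, and your bookkeeping of nondegeneracy (linear independence of the $X_{\psi_j}$ on a nonsingular chart, $P\ne0$ from Condition~\ref{co:02-1}, hence the one-dimensional orthogonal complement is spanned by $P$ and pins down $X_\tau'=P/\abs{P}^2$) is exactly the care the argument requires.
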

Accordingly, the expression~\ref{KO-2} for the local canonical
operator becomes
\begin{equation}\label{eq:04-06}
    [K_{(U,I)} a](x,h)=
    e^{\frac ih\tau(x)-i\pi m_{(U,I)}/2} a(\tau,\psi)
    \frac{\sqrt{\abs{\mu(\tau,\psi)}\abs{P(\tau,\psi)}}}
    {\sqrt[4]{\smash[b]{\det(X_\psi^*(\tau,\psi)X_\psi(\tau,\psi))}}}
    \Bigg|\mathstrut_{\substack{\tau=\tau(x)\\\psi=\psi(x)}},
\end{equation}
where $\tau=\tau(x)$, $\psi=\psi(x)$ is the expression of the
eikonal coordinates $(\tau,\psi)$ via the canonical coordinates $x$
in the chart $(U,I)$.

\subsection{Canonical operator near focal points}\label{sec:04-05}

We have defined the action of the canonical operator on functions
$ a\in C_0^\infty(\La)$ whose support does not meet the set
$\Ga\in\La$ of focal points. Now we should define how the canonical
operator acts on functions supported near focal points.  This is
where our construction differs from that the standard one.

Let $\a_*\in\Ga$; i.e., $\cJ_x(\a_*)=0$. We will construct a ``new
singular chart'' in a neighborhood $U$ of $\a_*$ and define the
``new local canonical operator'' on functions $ a\in
C_0^\infty(U)$. We choose and fix some path $\ga$ on $\La$ with
$\ga(0)=\a_0$ and $\ga(1)=\a_*$ and define the eikonal $\tau(\a)$
in a sufficiently small neighborhood of $\a_*$ by
formula~\eqref{eiko}, where the second integral is taken over an
arbitrary path lying in that neighborhood. Next, we supplement the
eikonal with $n-1$ functions $\psi_1,\dotsc,\psi_n$, thus obtaining
a system $(\tau,\psi)$ of eikonal coordinates on $\La$ in a
neighborhood of $\a_*$. The coordinates of $\a_*$ will be denoted
by $(\tau_*,\psi_*)$. Let $k=\rank X_\psi(\tau_*,\psi_*)$. We have
$k<n-1$, because otherwise $\a_*$ would not be a focal point. Take
$k$ linearly independent columns of the matrix
$X_\psi(\tau_*,\psi_*)$ and accordingly divide the variables $\psi$
into two parts $\psi'$ and $\psi''$, the first part including the
variables corresponding to the chosen linearly independent columns,
and the second part including all the other variables.\footnote{If
$k=0$, then $\psi'$ is empty and the formulas given below undergo
obvious modifications. This is always the case for $n=2$.} We
assume (renumbering the variables $\psi$ if necessary) that
$\psi'=(\psi_1,\dotsc,\psi_k)$ and $\psi''=(\psi_{k+1},\dotsc,
\psi_{n-1})$. Note that the Gram matrix $X_{\psi'}^*X_{\psi'}$ is
invertible in a neighborhood of the point $(\tau_*,\psi_*)$.

Consider the system of $k+1$ equations
\begin{equation}\label{me1}
    \langle P(\tau,\psi),x-X(\tau,\psi)\rangle=0, \qquad
    \langle X_{\psi_j}(\tau,\psi),x-X(\tau,\psi)\rangle =0,
    \quad j=1,\dotsc,k.
\end{equation}
\begin{proposition}
System~\eqref{me1} defines smooth functions
\begin{equation}\label{e04}
    \tau=\tau(x,\psi''),\qquad \psi'=\psi'(x,\psi'')
\end{equation}
in a neighborhood of the point $(x_*,\psi''_*)$, where
$x_*=X(\tau_*,\psi_*)$, such that $\tau_*=\tau(x_*,\psi_*'')$ and
$\psi_*'=\psi'(x_*, \psi_*'')$. Moreover, there exists a
neighborhood $W$ of the point $(x_*,\psi''_*)\in\RR^{2n-1-k}$ such
that the following conditions hold\rom:

\rom{(i)} The differentials
$d\tau_{\psi_{k+1}},\dotsc,d\tau_{\psi_{n-1}}$ are linearly
independent at each point of the set
\begin{equation*}
  \Pi=\{(x,\psi'')\in W\colon \tau_{\psi''}(x,\psi'')=0\},
\end{equation*}
which is therefore an $n$-dimensional submanifold.

\rom{(ii)} The image $U$ of $\Pi$ under the mapping
$(x,\psi'')\longmapsto (x,\tau_x(x,\psi''))$ is contained in $\La$
and is a neighborhood of the point $\a_*$ in $\La$.

\rom{(iii)} For $(x,\psi'')\in W$, one has $\det
M(\tau(x,\psi''),\psi'(x,\psi''),\psi'')\ne0$, where
\begin{equation}\label{M-matrix}
    M=\begin{pmatrix}
      P & X_{\psi'} & P_{\psi''}
      -P_{\psi'}(X_{\psi'}^*X_{\psi'})^{-1}X_{\psi'}^*X_{\psi''}
    \end{pmatrix}.
\end{equation}
\end{proposition}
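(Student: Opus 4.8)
The plan is to treat the three assertions in sequence, extracting all of them from a single implicit-function-theorem analysis of the map defined by the left-hand sides of \eqref{me1}. First I would set up the right framework. Write $F\colon(x,\tau,\psi)\mapsto\bigl(\langle P,x-X\rangle,\langle X_{\psi_1},x-X\rangle,\dots,\langle X_{\psi_k},x-X\rangle\bigr)\in\RR^{k+1}$, where $P=P(\tau,\psi)$ and $X=X(\tau,\psi)$. At the base point $(x_*,\tau_*,\psi_*)$ we have $x_*=X(\tau_*,\psi_*)$, so $x-X$ vanishes there and $F=0$. To solve for $(\tau,\psi')$ as functions of $(x,\psi'')$ I must check that the $(k+1)\times(k+1)$ Jacobian $\pa F/\pa(\tau,\psi')$ at the base point is nonsingular. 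Since $x-X=0$ at the base point, differentiating $\langle P,x-X\rangle$ in $\tau$ or $\psi_j$ kills the term where the derivative hits $P$, leaving $-\langle P,X_\tau\rangle=-1$ and $-\langle P,X_{\psi_j}\rangle=0$ by \eqref{eq:04-01}; similarly differentiating $\langle X_{\psi_i},x-X\rangle$ leaves $-\langle X_{\psi_i},X_\tau\rangle$ and $-\langle X_{\psi_i},X_{\psi_j}\rangle$. So the relevant Jacobian matrix has the block form $\begin{pmatrix}-1 & 0\\ -\langle X_{\psi'},X_\tau\rangle & -X_{\psi'}^*X_{\psi'}\end{pmatrix}$ (first row/column coming from the $\tau$-direction, the rest from $\psi'$), whose determinant is $\pm\det(X_{\psi'}^*X_{\psi'})\ne0$ by the remark preceding the statement (the $k$ chosen columns are linearly independent, so their Gram matrix is invertible near the base point). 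The implicit function theorem then yields the smooth functions \eqref{e04} with the asserted values at the base point, on some neighborhood $W$.

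Next I would prove (i). The function $\tau_{\psi''}(x,\psi'')$ is the $(n-1-k)$-vector whose components are the partials $\pa\tau/\pa\psi_{k+1},\dots,\pa\tau/\pa\psi_{n-1}$ of the solution $\tau(x,\psi'')$. To show the differentials $d\tau_{\psi_{k+1}},\dots,d\tau_{\psi_{n-1}}$ are linearly independent on $\Pi=\{\tau_{\psi''}=0\}$, it suffices to show the $(n-1-k)\times(n-1-k)$ matrix $\pa\tau_{\psi''}/\pa\psi''$ — the Hessian block $\tau_{\psi''\psi''}$ — is nonsingular at points of $\Pi$, since then the $\psi''$-partials alone already give $n-1-k$ independent linear functionals. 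Computing this Hessian requires differentiating the implicit relations \eqref{me1} twice; I expect the relevant Hessian to reduce, modulo invertible factors, to the Gram-type matrix $X_{\psi''}^*X_{\psi''}-X_{\psi''}^*X_{\psi'}(X_{\psi'}^*X_{\psi'})^{-1}X_{\psi'}^*X_{\psi''}$ — the Schur complement measuring the part of $\mathrm{span}\,X_{\psi''}$ transverse to $\mathrm{span}(P,X_{\psi'})$ — shrinking $W$ if necessary so that this stays invertible near the base point. (That this Schur complement is invertible at the base point is itself a point needing justification: it says that together $P$, $X_{\psi'}$, $X_{\psi''}$ span an $n$-dimensional space there, which I would derive from $d\tau\ne0$ and the definition of $k$ as $\operatorname{rank} X_\psi$ combined with \eqref{eq:04-01}.) Granting this, $\Pi$ is cut out of $W\subset\RR^{2n-1-k}$ by $n-1-k$ functions with independent differentials, hence is an $n$-dimensional submanifold.

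For (ii) I would show that on $\Pi$ the map $(x,\psi'')\mapsto(x,\tau_x(x,\psi''))$ lands in $\La$ and parametrizes a neighborhood of $\a_*$. The key identity is that when $\tau_{\psi''}=0$ the vector $\tau_x(x,\psi'')$ equals $P(\tau(x,\psi''),\psi(x,\psi''))$; this follows because the eikonal equation \eqref{eq:02-eiko} gives $d\tau=\langle P,dX\rangle$, and one reads off the $x$- and $\psi''$-components of $d\tau$ after substituting \eqref{e04}, using the defining equations \eqref{me1} (which assert exactly that $x-X$ is orthogonal to $P$ and to $X_{\psi'}$) to kill the unwanted terms — on $\Pi$ the $\psi''$-component also drops out, forcing $\tau_x=P$. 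Hence the image point $(x,\tau_x)=(X',P')$ for the corresponding $\a'\in\La$, so the image lies in $\La$; that it is a neighborhood of $\a_*$ follows by a dimension count (both $\Pi$ and $\La$ are $n$-dimensional and the map is an immersion at the base point, since its composition with projection to $x$-then-back recovers the chart structure) plus the base-point normalization. Finally, for (iii), I would simply observe that $\det M\ne0$ at the base point is precisely the statement that $P$, the columns of $X_{\psi'}$, and the $(n-1-k)$ vectors obtained by projecting the columns of $X_{\psi''}$ onto the orthogonal complement of $\mathrm{span}\,X_{\psi'}$ together span $\RR^n$ — equivalently, that $\mathrm{span}(P,X_{\psi'},X_{\psi''})=\RR^n$, the same full-rank fact used in (i) — and then $\det M\ne0$ persists on a neighborhood by continuity, shrinking $W$ once more. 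The main obstacle is the second-derivative computation in part (i): getting the Hessian $\tau_{\psi''\psi''}$ into recognizable Schur-complement form from the twice-differentiated implicit relations, and cleanly invoking the symmetry relations \eqref{eq:04-02} to see that the cross terms assemble correctly.
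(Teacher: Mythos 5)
The first paragraph (solving \eqref{me1} for $(\tau,\psi')$ via the implicit function theorem, with the $(k+1)\times(k+1)$ Jacobian reducing at the base point to a triangular block whose determinant is $\pm\det(X_{\psi'}^*X_{\psi'})\neq0$) is correct and is the natural first step.

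However, the rest of the proposal is built on a claim that is false, and on a misreading of the matrix $M$. You assert repeatedly that $\operatorname{span}(P,X_{\psi'},X_{\psi''})=\RR^n$ at $\a_*$, but this is exactly what \emph{cannot} happen at a focal point: by definition of $k$ the columns $X_{\psi''}$ already lie in $\operatorname{span}(X_{\psi'})$, so $\operatorname{span}(P,X_{\psi'},X_{\psi''})=\operatorname{span}(P,X_{\psi'})$ has dimension $k+1<n$. For the same reason, the Schur complement $X_{\psi''}^*X_{\psi''}-X_{\psi''}^*X_{\psi'}(X_{\psi'}^*X_{\psi'})^{-1}X_{\psi'}^*X_{\psi''}$ you want to see in the Hessian is the Gram matrix of $(I-\Pi_{X_{\psi'}})X_{\psi''}$, which is identically $0$ at $\a_*$; in fact a direct computation shows $\tau_{\psi''\psi''}(x_*,\psi''_*)=0$. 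So the route in (i) — proving nondegeneracy via invertibility of $\tau_{\psi''\psi''}$ — simply cannot work; nondegeneracy of the phase must be extracted from the mixed block $\tau_{\psi''x}$, which at $\a_*$ equals $Q^*(I-X_\tau P^*)$ where $Q_m=P_{\psi''_m}-P_{\psi'}(X_{\psi'}^*X_{\psi'})^{-1}X_{\psi'}^*X_{\psi''_m}$ are the last columns of $M$. Its rank is $n-1-k$ because the $Q_m$ span $\ker(d\pi_x|_{T_{\a_*}\La})$ (tangent vectors to $\La$ with zero $x$-component) and $P\notin\operatorname{span}(Q)$ since $\om^2\bigl((0,P),(X_\tau,P_\tau)\bigr)=\langle P,X_\tau\rangle=1\neq0$ would contradict $\La$ being Lagrangian.

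Relatedly, in (iii) you read the last columns of $M$ as projections of $X_{\psi''}$ onto $\operatorname{span}(X_{\psi'})^\perp$. They are not: the paper's $M$ has the mixed blocks $P_{\psi''}-P_{\psi'}(X_{\psi'}^*X_{\psi'})^{-1}X_{\psi'}^*X_{\psi''}$, built from $P$-derivatives projected using the $X$-Gram matrix. (Your reading would again give zero columns at $\a_*$.) Check the $n=2$, $k=0$ case: $M=(P,P_\psi)$ while $X_\psi(\a_*)=0$; here $\det M\neq0$ follows from $P_\psi\perp X_\tau$ (Lagrangian relations) and $\langle P,X_\tau\rangle=1$, with no appeal to $X_\psi$ spanning anything. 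The symplectic structure of $\La$, via relations~\eqref{eq:04-01}--\eqref{eq:04-02}, is the real input. Finally, in (ii) the identity $\tau_x=P$ on $\Pi$ is correct but your derivation is circular: pulling $d\tau=P\,dX$ back through the implicit functions just reproduces $d\tau=d\tau$. What one actually needs is that on $\Pi$ the vector $\delta=x-X(\tau,\psi',\psi'')$ vanishes; combining \eqref{me1} with $\tau_{\psi''}=0$ makes $\delta$ orthogonal to all $n$ columns of $M$, so $\delta=0$ precisely when $\det M\neq0$. Thus (iii) is logically prior to (i) and (ii), not a corollary as you propose.
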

The domain $U\subset \La$, together with the eikonal coordinates
$(\tau,\psi)$ and the functions \eqref{e04}, will be called a
\textit{new singular chart} on $\La$. (Without loss of generality,
we can assume that both $U$ and $W$ are connected and simply
connected.)

We define the \textit{index} $m_U$ of the new singular chart by
setting
\begin{equation}\label{MasIndNew}
    m_U=\frac1\pi\BL(\arg\cJ^\e(\a_0)\big|_{\e=0}^{\e=1}
    +\var_\ga\arg\cJ^1(\a)
    -\sum_{s=1}^{2n-k-1}\arg\la_s\BR),
\end{equation}
where $\ga$ is the same path as above, the $\la_s$ are the
eigenvalues of the $(2n-k-1)\times(2n-k-1)$ matrix $\begin{pmatrix}
      E-i\tau_{xx}(x_*,\psi''_*) &
      -i\tau_{x\psi''}(x_*,\psi''_*) \\
      -i\tau_{\psi'' x}(x_*,\psi''_*) &
      -i\tau_{\psi''\psi''}(x_*,\psi''_*)
    \end{pmatrix}$,
and $\arg\la_j\in\bl[-\frac\pi2,\frac\pi2\br]$.

For $ a\in C_0^\infty(U)$, set
\begin{multline}\label{e05}
    [K_U^{1/h} a](x,h)
    =\frac{e^{-i\pi
    m_U/2}}{(2\pi h)^{\frac{n-1-k}2}}
    \int
    \frac{e^{i\frac\tau h} a\sqrt{\left\vert\mu\det M\right\vert}}
    {\sqrt{\det(X_{\psi'}^*X_{\psi'})}}
    \bigg|_{\substack{\tau=\tau(x,\psi'')\\
    \psi'=\psi'(x,\psi'')}}\chi(x,\psi'')\,d\psi'',
\end{multline}
where $M$ is the matrix \eqref{M-matrix} and $\chi(x,\psi'')$ is a
smooth cutoff function on $W$ such that $\chi=1$ on $\Pi$ and
$\chi(x,\cdot)$ is compactly supported in
$W_x=\{\psi''\colon(x,\psi'')\in W\}$ for each $x$. For $k=0$ (the
coordinates $\psi'$ are absent), formula~\eqref{e05} becomes
\begin{equation}\label{e06}
    [K_U^{1/h} a](x,h)
    =\frac{e^{-i\pi
    m_U/2}}{(2\pi h)^{\frac{n-1-k}2}}
    \int \left[e^{i\frac\tau h}  a
    \sqrt{\abs{\mu\det(P,P_{\psi})}}
    \right]_{\tau=\tau(x,\psi)}\chi(x,\psi)\,d\psi.
\end{equation}

\subsection{Comparison with the standard canonical operator}

Let us compare the canonical operator $K_U^{1/h}$ constructed in
Sec.~\ref{sec:04-05} with the standard local canonical operator.
Without loss of generality, we assume that the domain $U$ is
sufficiently small and hence can be covered with a single ``old''
canonical chart with coordinates $(x_I,p_\I)$. Consider the
canonical operator $K_{(U,I)}$ defined by formula~\eqref{KO-1},
where the eikonal $\tau_{(U,I)}$ and the index $m_{(U,I)}$ are
defined with the use of the same path $\ga$ as in the construction
of $K_U^{1/h}$. Then the following assertion holds.
\begin{theorem}\label{compare}
Under these assumptions, one has
\begin{equation}\label{ravno}
    K_U^{1/h} a=K_{(U,I)} a+O(h)
    \qquad\text{for every $ a\in C_0^\infty(U)$.}
\end{equation}
Moreover, for each $ a\in C_0^\infty(U)$ there exist $ a_j\in
C_0^\infty(U)$, $j=1,2,\dotsc$, such that
\begin{equation}\label{ravno-N}
    K_U^{1/h} a=K_{(U,I)}\BL( a
    +\sum_{j=1}^{N-1}h^j a_j\BR)+O(h^N),
    N=1,2,\dotsc\,.
\end{equation}
\end{theorem}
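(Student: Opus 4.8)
The plan is to prove \eqref{ravno-N} directly by reducing the oscillatory integral \eqref{e05} (which defines $K_U^{1/h}a$) to a standard Fourier integral of the form \eqref{KO-1} via the stationary phase method, tracking all factors and phases. Note that \eqref{ravno} is the $N=1$ case of \eqref{ravno-N}, so it suffices to establish the latter.

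\smallskip
\textbf{Step 1: Reduce to the model integral and apply stationary phase in $\psi''$.} In \eqref{e05}, the integration is over $\psi''\in W_x$, and on the set $\Pi$ (where $\chi=1$) we have $\tau_{\psi''}=0$ by the defining equations \eqref{me1}. Thus for fixed $x$, the phase $\tau(x,\psi'')$ has critical points exactly at the points of $\Pi\cap(\{x\}\times W_x)$. Part~(i) of the Proposition guarantees that these critical points are nondegenerate in the $\psi''$ variables transverse to $\Pi$ (the differentials $d\tau_{\psi_{k+1}},\dots,d\tau_{\psi_{n-1}}$ being independent means the Hessian $\tau_{\psi''\psi''}$ restricted appropriately is nonsingular along $\Pi$). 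I would therefore apply the stationary phase theorem (the one reproduced in Section~\ref{s6}) in the $(n-1-k)$ variables $\psi''$. This produces an asymptotic expansion $K_U^{1/h}a \sim e^{i\tau(x)/h}\cdot(\text{amplitude})\cdot(1+hb_1+h^2b_2+\cdots)$ near regular points $x$, and near focal points it produces the appropriate Fourier-type structure; the leading amplitude is $\sqrt{\abs{\mu\det M}/\det(X_{\psi'}^*X_{\psi'})}\cdot\abs{\det\tau_{\psi''\psi''}}^{-1/2}$ evaluated at the critical point, times $e^{-i\pi m_U/2}$ and the signature phase factor $e^{i\frac\pi4\operatorname{sgn}\tau_{\psi''\psi''}}$.

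\smallskip
\textbf{Step 2: Identify the leading term with \eqref{KO-1}.} The main computational task is to show that after Step~1 the leading term coincides with $[K_{(U,I)}a](x,h)$ as given by \eqref{KO-1}, modulo $O(h)$. This requires three sub-identifications: (a) the \emph{phase} — one must check that $\tau(x,\psi'')$ at its critical point equals the eikonal $\tau_{(U,I)}(\a)$ pulled back via $x=X(\a)$ (this follows because $\tau$ solves the Pfaff equation \eqref{eq:02-eiko} and the second integral in \eqref{eiko} is path-independent in $U$; the constraint $\tau_{\psi''}=0$ picks out the correct restriction); (b) the \emph{amplitude} — one must verify that $\sqrt{\abs{\mu\det M}/\det(X_{\psi'}^*X_{\psi'})}\cdot\abs{\det\tau_{\psi''\psi''}}^{-1/2}$ equals $\abs{\cJ_I(\a)}^{-1/2}$ up to the $(2\pi h)^{\abs{\I}/2}$-type normalization absorbed into the definition of $\ov\cF^{1/h}$; this is a linear-algebra identity relating the matrix $M$ of \eqref{M-matrix}, the Gram matrix $X_{\psi'}^*X_{\psi'}$, and the Jacobian $\det\pd{(X_I,P_\I)}\a$, and it should follow by writing the Jacobian in eikonal coordinates using Proposition~\ref{pr:04-01} and a Schur-complement manipulation; (c) the \emph{index} — one must show that $m_U$ as defined in \eqref{MasIndNew} plus the signature correction $-\tfrac12\operatorname{sgn}\tau_{\psi''\psi''}$ from stationary phase equals $m_{(U,I)}$ from \eqref{MasIndCh1}/\eqref{MasIndCh2}; this is the cohomological bookkeeping that matches the product of eigenvalue-arguments $\sum\arg\la_s$ of the matrix $\operatorname{diag}(E,0)-i\tau$-Hessian against the Maslov-index contribution $\abs{\I}/2$ plus the variation of $\arg\cJ_I$ along $\ga$.

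\smallskip
\textbf{Step 3: Extract the higher-order terms $a_j$.} Once the leading term is matched, the standard stationary phase expansion gives the correction terms: each $h^j$-coefficient $b_j(x)$ in the expansion of $K_U^{1/h}a$ is a differential operator (built from powers of the inverse Hessian $\tau_{\psi''\psi''}^{-1}$ applied to the amplitude) evaluated at the critical point, hence is again of the form $[K_{(U,I)}a_j](x,h)$ for a suitable $a_j\in C_0^\infty(U)$ — because $K_{(U,I)}$ acting on $a_j$ produces exactly $e^{i\tau_{(U,I)}/h}\abs{\cJ}^{-1/2}a_j$ in the nonsingular case (and the Fourier-transformed analogue in the singular case), and we can simply \emph{define} $a_j$ by dividing $b_j$ by the leading amplitude factor. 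One must check $a_j$ is smooth and compactly supported in $U$: smoothness is clear since we divide by a nonvanishing factor, and the support is controlled because $b_j$ depends only on jets of $a$ at the critical points, which lie over $\operatorname{supp}a$.

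\smallskip
\textbf{Main obstacle.} I expect the hard part to be Step~2(b)–(c): the precise identification of the amplitude and especially the Maslov index. The index computation \eqref{MasIndNew} is deliberately packaged in terms of eigenvalue-arguments of a complex symmetric matrix, and matching it to the classical index \eqref{MasIndCh1} requires a careful homotopy argument through the family of complex-symmetric nondegenerate matrices (the $\e$-regularization in \eqref{Je}–\eqref{MasInd}), showing that the signature jump from stationary phase is exactly compensated. The amplitude identity is "just" linear algebra but involves the $(2n-k-1)$-dimensional matrix $M$ and the reduction to the $\abs{\I}$-dimensional Hessian $\tau_{\psi''\psi''}$, so one needs a clean block-matrix factorization — likely the place where the "lengthy computations" the authors allude to actually occur.
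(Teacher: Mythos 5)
The central difficulty of this theorem is that it asserts agreement of the two constructions \emph{across the caustic}, and Step~1 of your proposal fails precisely there. You assert that part~(i) of the Proposition --- linear independence of the differentials $d\tau_{\psi_{k+1}},\dots,d\tau_{\psi_{n-1}}$ on $\Pi$ --- implies that the Hessian $\tau_{\psi''\psi''}$ is nonsingular on $\Pi$. That is not what condition~(i) says. Linear independence of $d\tau_{\psi_j}$ as one-forms in $(x,\psi'')$ means that the full rectangular block
\begin{equation*}
  \begin{pmatrix}\tau_{\psi''x}&\tau_{\psi''\psi''}\end{pmatrix}
\end{equation*}
has rank $n-1-k$; it is the H\"ormander nondegeneracy condition~\eqref{e505} for a phase function, not invertibility of the square block $\tau_{\psi''\psi''}$. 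At a focal point $x=X(\a_*)$ the Hessian $\tau_{\psi''\psi''}$ is in fact degenerate --- that degeneracy is exactly what makes $\a_*$ focal and what makes the representation~\eqref{e05} an improvement over the plain $e^{i\tau/h}$ form. Consequently the stationary phase theorem (Theorem~\ref{spm}) cannot be applied in $\psi''$ alone at such $x$; the phrase ``near focal points it produces the appropriate Fourier-type structure'' has no content, because stationary phase simply does not apply there. The same gap propagates into Step~3, where you ``divide $b_j$ by the leading amplitude factor'': in a singular chart $K_{(U,I)}a_j$ is itself a Fourier integral, not $e^{i\tau/h}|\cJ|^{-1/2}a_j$, so the naive division is undefined at the caustic.

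The paper's route sidesteps this by \emph{not} doing stationary phase directly in the $x$-representation. Instead it recognizes \eqref{e05} as a Fourier integral $\cI[\Phi,\ph\sqrt{F[\Phi,d\mu]}]$ with phase $\Phi(x,\psi'')=\tau(x,\psi'')$ and auxiliary variables $\psi''$, and then invokes Theorem~\ref{atheorem1}. The proof of Theorem~\ref{atheorem1} first applies the $1/h$-Fourier transform $\cF^{1/h}_{x_\I\to p_\I}$ to \emph{both} sides, after which the stationary phase calculation in the $(x_\I,\theta)$-variables is uniformly nondegenerate because the relevant Hessian determinant equals $F[\Phi,d\mu]\ne0$ (formula~\eqref{crucial}); this is what ``sees through'' the caustic. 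Your Step~2(a)--(c) correctly identifies the three things that need matching (phase, amplitude, index), and those identifications do constitute the ``lengthy computations,'' but the frame in which to carry them out is the $(x_I,p_\I)$-side after the partial Fourier transform, not the $x$-side.
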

\begin{proof}
[Proof \rm is based on Theorem~\ref{atheorem1} in the Appendix] It
is rather technical, and we omit the lengthy computations.
\end{proof}

\subsection{Closing remark}

Condition~\ref{co:02-1} is actually not restrictive, because if it
is violated in a specific problem, then one can introduce an
additional variable~$x_{n+1}$ (a~cyclic variable) on which the
Hamiltonian does not depend and consider solutions of the form
$v(x,x_{n+1})=u(x)e^{\frac ihx_{n+1}}$, where $u(x)$ is the desired
solution of the original problem. If $\La_u$~is the Lagrangian
manifold corresponding to~$u$, then the Lagrangian manifold
corresponding to~$v$ has the form
\begin{equation*}
    \La_v=\La_u\times\{(x_{n+1},p_{n+1})\in\RR^2\colon
    p_{n+1}=1,\;x_{n+1}\text{ is arbitrary}\},
\end{equation*}
and one can readily see that the 1-form
\begin{equation*}
    (p_1\,dx_1+\dotsm+p_{n+1}\,dx_{n+1})|_{\La_v}=
    dx_{n+1}+(p_1\,dx_1+\dotsm+p_{n}\,dx_{n})|_{\La_u}
\end{equation*}
is nonzero on $\La_v$. Thus, this uniformization procedure always
permits one to ensure that Condition~\ref{co:02-1} is satisfied. We
do not further elaborate on the topic.

\section{Examples}\label{s3}

\subsection{Bessel function of order $1/2$}

Consider the Lagrangian manifold
\begin{equation}\label{3D-1}
    \La^3=\{(x,p)\in\RR^6\colon x=X(\tau,\om),\,p=P(\tau,\om),\;
    \tau\in\RR,\,\om\in\SS^2\},
\end{equation}
where $X(\tau,\om)=\tau\bn(\om)$, $P(\tau,\om)=\bn(\om)$, and if
$\om\in\SS^2$ is represented by the spherical coordinates,
$\om=(\theta,\psi)$, then $\bn(\om)\equiv\bn(\theta,\psi)
=(\sin\theta\cos\psi,\sin\theta\sin\psi,\cos\theta)$. We equip
$\La^3$ with the measure $d\mu=d\tau\wedge d\om$, where $d\om$ is
the surface area element of the unit sphere $\SS^2$. In the
spherical coordinates,
\begin{equation}\label{3D-2}
    d\mu=\mu(\tau,\theta,\psi)\,d\tau\wedge d\theta\wedge d\psi,
    \quad
    \mu=\sin\theta.
\end{equation}
Obviously,
\begin{equation}\label{3D-3}
    P(\tau,\om)\,dX(\tau,\om)=d\tau,
\end{equation}
so that $(\tau,\om)$ are eikonal coordinates on $\La$. Next, the
equation
\begin{equation}\label{3D-4}
    \langle P(\tau,\om),x-X(\tau,\om)\rangle=0
\end{equation}
is uniquely solvable for $\tau$,
\begin{equation}\label{3D-5}
    \tau(x,\om)=\langle x,\bn(\om)\rangle,
\end{equation}
and the Jacobian
\begin{equation}\label{3D-6}
    \det(P,P_\om)=\det(P,P_\theta,P_\psi)=
    \det\begin{pmatrix}
      \sin\theta\cos\psi & \cos\theta\cos\psi & -\sin\theta\sin\psi \\
      \sin\theta\sin\psi & \cos\theta\sin\psi & \sin\theta\cos\psi \\
      \cos\theta & -\sin\theta & 0 \\
    \end{pmatrix}=\sin\theta
\end{equation}
is nonzero except for $\theta=0,\pi$. (These singularities are
however artificial: they are due to the degeneration of spherical
coordinates at these points and would not occur if one uses
different spherical coordinates to represent $\om$.) Thus, the
entire manifold $\La^3$ is covered by one singular chart with
coordinates $(\tau,\om)$, and the canonical operator on it has the
form
\begin{equation}\label{3D-7}
\begin{split}
    [K_{\La^3} a](x,h)&=-\frac1{2\pi h}\int
     \int \left[e^{\frac ih\tau}  a \sqrt{\abs{\mu\det(P,P_{\om})}}
    \right]_{\tau=\tau(x,\om)}\chi(x,\om)\,d\theta\wedge d\psi
    \\
    &=-\frac1{2\pi h}\int
     \int e^{\frac ih\langle x,\bn(\om)\rangle}
      a(\langle x,\bn(\om)\rangle,
     \theta,\psi) \sin\theta\,d\theta\wedge d\psi.
\end{split}
\end{equation}
Take $ a\equiv1$; then the integral depends only on $\abs{x}$, and
by taking $x=(0,0,\abs{x})$ we obtain
\begin{multline}\label{3D-8}
    [K_{\La^3}1](x,k)=-\frac1h
    \int_0^\pi e^{\frac ih\abs{x}\cos\psi}\sin\psi\,d\psi
    =\frac{1}{i\abs{x}}e^{\frac ih\abs{x}\cos\psi}
    \big|_{\psi=0}^{\psi=\pi}
    \\=-\frac{2\sin(\abs{x}/h)}{\abs{x}}=-\sqrt{\frac{2\pi
    }{h\abs{x}}}J_{\frac12}(\abs{x}/h),
\end{multline}
where
\begin{equation}\label{3D-9}
    J_{\frac12}(z)=\sqrt{\frac2{\pi z}}\sin z
\end{equation}
is the Bessel function of order $1/2$.

\subsection{Bessel type beams in three-dimensional space}

Consider the Lagrangian manifold
\begin{equation}\label{bess-0}
\Lambda^{3}_0=\{(x,p)\in\RR^6\colon x=X_0(\a,\psi,\phi),\;
             p=P_0(\a,\psi,\phi),\;\;\a,\phi\in\RR,\;\psi\in\SS^1\},
\end{equation}
where
\begin{equation}\label{bess-1}
    X_0(\a,\psi,\phi)=\begin{pmatrix}
      \mathbf{n}(\psi)\alpha \\
      \phi
    \end{pmatrix}, \;
    P_0(\a,\psi,\phi)=\begin{pmatrix}
      \lambda(\phi)\mathbf{n}(\psi) \\
      \alpha\lambda'(\phi)+k
    \end{pmatrix},\quad
    \bn(\psi)=\begin{pmatrix}
      \cos\psi \\
      \sin\psi
    \end{pmatrix},
\end{equation}
$\lambda(\phi)>0$ is a smooth function (for example,
$\lambda=a(1+\tanh \phi)+b$, $a,b>0$) and $k$ is a constant. In what
follows, we use the notation $x_\perp=(x_1,x_2)$ and
$p_\perp=(p_1,p_2)$.

One can readily see that the cycle $\Ga_0\subset\La_0^3$ of
singularities (the set where the projection $\La_0^3\to\RR_x^3$
degenerates) is given by the equation $\alpha=0$; thus, $\Ga_0$ is
a two-dimensional cylinder, and the caustic---the projection
$\gamma_0$ of $\Ga_0$ onto the configuration (physical) space
$\mathbb{R}^3_{x}$---is the $x_3$-axis. It is \textit{degenerate}
in that to each point in $\ga_0$ there corresponds a continuum
(namely, a circle) of focal points. The manifold $\Lambda^{3}_0$ is
diffeomorphic to the product of the circle $\SS^1$ by the plane
$\RR^2$, and the projection
$\Lambda^{3}_0\setminus\Ga_0\to\mathbb{R}^3_{x}\setminus\ga_0$ is a
double covering. We equip $\Lambda^{3}_0$ with the measure
\begin{equation}\label{me2}
    d\mu_0=\la^{-1}(\phi)\,d\a\wedge
    d\phi\wedge d\psi.
\end{equation}

Let us construct the canonical operator on the manifold $\La_0^3$
with the measure $d\mu_0$. First, let us find eikonal coordinates.
One has
\begin{equation*}
    P_0\,dX_0=\lambda(\phi)\,d\alpha +\alpha\lambda'(\phi)\,d\phi+k\,d\phi
         = d(\lambda(\phi)\alpha+k\phi),
\end{equation*}
and we can define the eikonal to be
\begin{equation}\label{eiko2}
    \tau=\lambda(\phi)\alpha+k\phi
\end{equation}
and the eikonal coordinates to be $(\tau,\phi,\psi)$. These
coordinates are in fact global on $\La_0^3$, because the passage to
the original coordinates is always possible by the formula
$\alpha=(\tau-k\phi)/\lambda(\phi)$. We have
\begin{equation}\label{koe-chto}
    \pd{\a}{\tau}=\frac1{\la(\phi)},\qquad\pd{\a}{\phi}
    =-\frac{k+\alpha\lambda'(\phi)}{\lambda(\phi)}.
\end{equation}
The measure in the eikonal coordinates is
\begin{equation}\label{me2a}
    d\mu_0=\la^{-2}(\phi)\,d\tau\wedge
    d\phi\wedge d\psi.
\end{equation}

We claim that $\La_0^3$ is covered by a single singular chart with
the functions $\tau=\tau(x,\psi)$ and $\phi(x,\psi)$ computed as
follows. (Thus, the role of $\psi'$ and $\psi''$ in \eqref{e04} is
played by $\phi$ and $\psi$, respectively, in our example.) Indeed,
let us write out Eqs.~\eqref{me1} for our case. They read
\begin{equation}\label{me1a}
    \langle
    \bn(\psi),x_\perp\rangle=\frac{\tau-k\phi}{\la(\phi)},\qquad
    x_3=\phi,
\end{equation}
and we obtain the global solutions
\begin{equation}\label{me1b}
    \tau(x,\psi)=\la(x_3)\langle\bn(\psi),x_\perp\rangle+kx_3,\qquad
    \phi(x,\psi)=x_3.
\end{equation}
Let us compute the determinant of the matrix \eqref{M-matrix} and
other objects occurring in formula \eqref{e05} as applied to our
case. In the eikonal coordinates, one can readily prove the
orthogonality relations
\begin{equation}\label{orth}
    \langle X_{0\phi},X_{0\psi}\rangle=\langle X_{0\phi},P_{0}\rangle=
    \langle X_{0\phi},P_{0\psi}\rangle=\langle
    P_0,P_{0\psi}\rangle=0,
\end{equation}
and hence, after some computations, we obtain
\begin{align*}
    \det M&=\det\bigl(P_0,X_{0\phi},P_{0\psi}\bigr)=
    \la^2(x_3)+(\la'(x_3)\langle\bn(\psi),x_\perp\rangle+k)^2,\\
    \det(X_{0\phi}^*X_{0\phi})&=\langle X_{0\phi},X_{0\phi}\rangle
    =\frac{1}{\lambda^2(x_3)}\bl[\la^2(x_3)+(\la'(x_3)\langle\bn(\psi),x_\perp\rangle+k)^2\br].
\end{align*}
Thus, the expression \eqref{e05} for the canonical operator becomes
\begin{equation}\label{beam2}
 [K_{\La^3}a](x,h)= \sqrt{\frac{i}{2\pi h}}e^{\frac{i}{h}kx_3}
 \int_0^{2\pi}e^{\lambda(x_3)\langle x_\perp,
 \mathbf{n}(\psi)\rangle}
  a(\la(x_3)\langle\bn(\psi),x_\perp\rangle+kx_3,x_3,\psi)\,d\psi.
\end{equation}
If $a(\tau,\phi,\psi)$ actually has the form
$a=a(\a,\phi)$ and is even in $\a$, then we obtain
\begin{equation}\label{f}
    [K_{\La^3}a](x,h)= \sqrt{\frac{2\pi i}{h}}a(\abs{x_\perp},x_3)e^{\frac{i}{h}kx_3}
    \mathcal{J}_0\BL(\frac{\lambda(x_3)\abs{x_\perp}}{h}\BR)+O(h),
\end{equation}
where $J_0$ is the Bessel function of zero order. If $a=\const$ and
$\la=\const$, then this function, which acquires the form
$\const\cdot e^{\frac{i}{h}kx_3}
    \mathcal{J}_0\bl(\frac{\la\abs{x_\perp}}{h}\bl)$,
is know as the \textit{Bessel beam} in optics (see \cite{Kisel}, where further references can be found). If $a$ is a compactly supported function, then the function \eqref{beam2} is a pulse oscillating in the $x_3$-direction and having the shape of a Bessel function in the variables $(x_1,x_2)$ with scale $\la(x_3)$ depending on $x_3$.
Now consider the problem of the beam evolution according to the
three-dimensional wave equation
\begin{equation}\label{WE}
\frac{1}{c^2}\frac{\pa^2 u}{\pa t^2}=\triangle
u\equiv\biggl(\frac{\pa^2 u}{\pa x_1^2}+\frac{\pa^2
 u}{\pa x_2^2}+\frac{\pa^2 u}{\pa x_3^2}\biggr).
\end{equation}
More precisely, we wish to study the solution of the corresponding
Cauchy problem for this equation. (For $\la,\a=\const$, the function $e^{\frac{i}{h}(\sqrt{\la^2+k^2}+kx_3)}
    \mathcal{J}_0\bl(\frac{\la\abs{x_\perp}}{h}\bl)$ is the exact solution of this equation.) It is well known that the
solution of the wave equation splits into two parts describing
waves running in opposite directions. We restrict ourselves to a
beam propagating in one direction. To this end, we factorize the
wave operator,
\begin{equation*}
  -h^2\pd{^2}{t^2}+h^2\triangle=\biggl(-ih\frac{\pa}{\pa t}-c \sqrt{-h^2\triangle} \biggr)
  \biggl(-ih\frac{\pa}{\pa t}+c \sqrt{-h^2\triangle} \biggr),
\end{equation*}
and consider the Cauchy problem for an equation of first order in
time,
\begin{equation}\label{beam3}
 -ih\frac{\pa u}{\pa t}+c \sqrt{-h^2\triangle} u=0,
u|_{t=0}=u_0,
\end{equation}
where $u_0$ has the form \eqref{beam2}. This problem can be solved
by means of the Fourier transform, which gives the answer in a form
of rapidly oscillating integral. (One can also pass to the cylindrical coordinates in the equation, but these coordinates produce a fictitious singularity on the $x_3$-axis, which is inconvenient.) The study of this integral is not
trivial, and in our opinion it is much easier to use the Maslov
canonical operator to describe this solution.  By applying
Maslov's general theory, we see that the asymptotic solution
\begin{equation}\label{Schr4}
    u=K_{\La_t}[a(t)],
\end{equation}
of \eqref{beam3} is based on the family of the Lagrangian manifolds
$\Lambda_t=g^t_H\Lambda_0^3$ obtained by the shift of the manifold
$\La_0^3$ with the help of the phase flow $g_H^t$ corresponding to
the Hamiltonian $H=c|p|\equiv c\sqrt{p_1^2+p_2^2+p_3^2}$. Since
this Hamiltonian does not depend on $x$, we can readily integrate
the corresponding Hamilton system and obtain
 \begin{gather} \label{bess-v} \Lambda^{3}_t=\{p_\perp=\mathbf{n}(\psi)\mathcal{P},\,p_3=P_3,\,
x_\perp=\mathbf{n}(\psi)\mathcal{X},\, x_3=X_3,\}\\
\mathcal{P}=\lambda(\phi),P_3=\alpha \lambda'(\phi)+k,
|P|=\sqrt{\lambda^2+(\alpha \lambda'(\phi)+k)^2},
\nonumber\\\mathcal{X}=\alpha+t\frac{\mathcal{P}}{|P|}=
\alpha+tc\frac{\lambda(\phi)}{\sqrt{\lambda^2+(\alpha\lambda'(\phi)+k)^2}},
 X_3=\phi+tc\frac{P_3}{|P|}, \nonumber
\end{gather}
Since the Hamiltonian is of first order in the momenta, it follows
that the eikonal is constant along the trajectories and is still
given by formula \eqref{eiko2} if we use the coordinates
$(\a,\phi,\psi)$ brought to $\La_t$ from $\La_0^3$ by the
Hamiltonian flow. In particular, we still have the global eikonal
coordinates $(\tau,\phi,\psi)$. In these coordinates,
\begin{equation*}
 X_\psi=\begin{pmatrix}\mathbf{n}_\perp\mathcal{X} \\0 \end{pmatrix},
 \qquad
 X_\phi=\begin{pmatrix}\mathbf{n}(\mathcal{X}_\phi+
   \mathcal{X}_\alpha\alpha_\phi)
 \\X_{3\phi}+X_{3\alpha}\alpha_\phi) \end{pmatrix}
\end{equation*}
(where the derivatives on the right-hand side are taken in the
coordinate system $(\a,\phi,\psi)$); moreover, the orthogonality
relations similar to \eqref{orth} hold, and so we have
\begin{equation*}
 \begin{pmatrix}X_\psi& X_\phi
  \end{pmatrix}^*\begin{pmatrix}X_\psi& X_\phi
  \end{pmatrix}
  =\begin{pmatrix}\mathcal{X}^2& 0\\0&X_\phi^2
  \end{pmatrix}.
\end{equation*}
and
\begin{equation}\label{Jac1}
  \det \Bigl(\begin{pmatrix}X_\psi& X_\phi
  \end{pmatrix}^*\begin{pmatrix}X_\psi& X_\phi
  \end{pmatrix}\Bigr)=\mathcal{X}^2\bigl((\mathcal{X}_\phi+
   \mathcal{X}_\alpha\alpha_\phi)^2+(X_{3\phi}+X_{3\alpha}\alpha_\phi)^2
  \bigr).
\end{equation}
Hence according to \eqref{eq:04-03} the focal points are defined by
the equations
\begin{align}\label{Jac2}
  \mathcal{X}&=0
\\\label{Jac3}(\mathcal{X}_\phi+
\mathcal{X}_\alpha\alpha_\phi)&=0,\quad
X_{3\phi}+X_{3\alpha}\alpha_\phi=0.
\end{align}
After some transformations, we find that the two equations in
\eqref{Jac3} are equivalent to the single equation
\begin{equation}\label{Jac4}
\bigl(\lambda^2(\phi)+(\alpha \lambda'(\phi)+k)^2\bigr)^2-
2ct\biggl(\lambda'(\phi)k+\alpha\biggl({\lambda'(\phi)}^2-\frac{\lambda(\phi)\lambda''(\phi)}{2}\bigg)\biggr)=0.
\end{equation}
The projection of the singularities defined by \eqref{Jac2} is the
$x_3$-axis  $x_1=x_2=0$. We assume for simplicity that Eqs.
\eqref{Jac4} do not hold on the support of the function
$a(\tau,\phi,\psi)$ for $t\in[0,T]$. By virtue of the orthogonality
relations \eqref{orth} and the formula
$|P_\psi|=\mathcal{P}=\lambda$, we have
\begin{equation}\label{Jac4aa}
    |\det M|=|P||P_\psi||X_\phi|=\biggl||P|^2-\frac{ct\mathcal{P}}{|P|}\biggl(2\frac{P_3}{|P|}
\lambda'- \alpha\frac{\mathcal{P}}{|P|}\lambda''\biggr)\biggr|.
\end{equation}
 Equations~\eqref{me1} read
\begin{equation}\label{tauphi1}
    \mathcal{P}(\langle\mathbf{n}(\psi),x_\perp\rangle-\mathcal{X})+
    P_3(x_3-X_3)= \mathcal{X}_\phi(\langle\mathbf{n}(\psi),x_\perp\rangle-\mathcal{X})+
    X_{3\phi}(x_3-X_3)=0.
\end{equation}
Since the vectors $P$ and $X_\phi$ are orthogonal, it follows that
so are the vectors $\begin{pmatrix} \mathcal{P}\\P_3
\end{pmatrix}$ and $\begin{pmatrix} \mathcal{X}_\phi\\X_{3\phi}
\end{pmatrix}$. Hence the determinant $\mathcal{P}X_{3\phi}-P_3\mathcal{X}_\phi$
of system \eqref{tauphi1} is up to the sign equal to the product of
norms of the latter vectors. This means that the determinant of system
\eqref{tauphi1} is zero if and only if \eqref{Jac3} is true, which
contradicts our assumption. Thus, Eqs.~\eqref{tauphi1} are
equivalent to the equations
\begin{equation*}
   \mathcal{X}= q,\quad X_3=x_3,\qquad\text{where}\quad
   q=\langle\mathbf{n}(\psi),\,x_\perp\rangle.
\end{equation*}
Let us treat these equations as a system for the unknown variables
$(\alpha, \phi)$,
\begin{equation}\label{tauphi2}
\alpha+ct\frac{\mathcal{P}}{|P|}=q,\quad \phi+ct\frac{P_3}{|P|}=x_3.
\end{equation}
For small $t$, this system is obviously solvable by the implicit
function theorem. We denote the solution by $\phi=\ph(q,x_3,t)$,
$\a=\a(q,x_3,t)$, $\tau=\la\a+k\phi\equiv\tau(q,x_3,t)$. Of course,
we cannot write out the solution explicitly, but we still can
obtain useful relations for it. For example, by eliminating
$1/|P|$, we obtain
\begin{equation*}
 (q-\alpha)^2+(x_3-\phi)^2=t^2c^2.
\end{equation*}
From this equation and \eqref{tauphi2}, we find that
\begin{gather}\label{alpha1}
   \alpha=q-\sqrt{R},\qquad |P|=\frac{ct\lambda}{\sqrt{R}},
   \qquad R(\phi,x_3,t)=t^2c^2-(x_3-\phi)^2,\\ \label{tau2}
   \tau=\lambda(\phi)(q-\sqrt{R(\phi,x_3,t)})+k\phi,
\end{gather}
which can be helpful. Now we can write the solution using the
canonical operator:
\begin{equation}\label{example}
    u(x,t,h)=\BL(\frac{i}{2\pi h}\BR)^{1/2}
    \int \left[e^{i\tau/h}
   a(\tau,\phi,\psi)\sqrt{{\frac{\la\abs{P}}{\abs{X_\phi}}}}
    \right]_{\substack{\phi=\ph(q,x_3,t)\\\tau=\tau(q,x_3,t)\\
    q=\mathbf{\langle n(\psi),x_\perp\rangle}}}\,d\psi.
\end{equation}
Note that it can be proved that for small $x_\perp$ this solution
still has the asymptotics equal to the product of an oscillating
exponential in the $x_3$-direction and the Bessel function in the
$x_\perp$-direction. The proof of this fact, as well as further simplification of the integral~\eqref{example}, requires much place, and here we do not dwell on the topic.

\section{Appendix. Fourier integrals and the canonical
operator}\label{s4}

In this short text, we discuss oscillatory integrals of the form
\eqref{FIO} below with parameter $h\to0$ and show how such
integrals (which are the counterpart of the Fourier integral
distributions introduced by H\"ormander \cite{Hor6}) are related to
Maslov's canonical operator (see \cite{Mas1} and also \cite{MSS1}).
The relationship between Fourier integral operators and Maslov's
canonical operator was also discussed at length in~\cite{NOSS,Yo}.

\subsection{Nondegenerate phase functions}

\begin{definition}
A real-valued function $\Phi(x,\theta)$ defined on an open set
$V\subset\RR_x^{n}\times\RR_{\theta}^m$ is called a
\textit{nondegenerate phase function} if the differentials
$d(\Phi_{\theta_1}),\dotsc,d(\Phi_{\theta_m})$ are linearly
independent on the set
\begin{equation}\label{e506}
    C_\Phi=\bl\{(x,\theta)\in V\colon \Phi_\theta(x,\theta)=0\br\},
\end{equation}
or, equivalently,
\begin{equation}\label{e505}
    \rank\begin{pmatrix}
      \Phi_{\theta x}(x,\theta) & \Phi_{\theta\theta}(x,\theta) \\
    \end{pmatrix}
        =m,\qquad (x,\theta)\in C_\Phi.
\end{equation}
\end{definition}
By the implicit function theorem, $C_\Phi$ is a smooth
$n$-dimensional manifold, and the functions~$\Phi_\theta$ can serve
as coordinates in the directions transversal to~$C_\Phi$.
\begin{proposition}
The mapping
\begin{equation}\label{e507}
         j_\Phi\colon C_\Phi\lra\RR_{(x,p)}^{2n},\quad
         (x,\theta)\longmapsto
              \bl(x,\Phi_x(x,\theta)\br),
\end{equation}
is a local diffeomorphism of $C_\Phi$ onto its image
$\La_\Phi=j_\Phi(C_\Phi)\subset\RR_{(x,p)}^{2n}$, which is a
Lagrangian submanifold \rom(possibly with self-intersections\rom).
\end{proposition}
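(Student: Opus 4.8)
The statement to prove is that $j_\Phi$ is a local diffeomorphism onto its image $\La_\Phi$ and that $\La_\Phi$ is Lagrangian. The approach has three stages: (1) show $j_\Phi$ is an immersion, in fact a local diffeomorphism onto its image; (2) show $\La_\Phi$ is $n$-dimensional, which is immediate from (1) since $\dim C_\Phi=n$; and (3) show the symplectic form $\om^2=dp\wedge dx$ pulls back to zero on $\La_\Phi$, i.e. $j_\Phi^*(dp\wedge dx)=0$ on $C_\Phi$.

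First I would set up coordinates on $C_\Phi$. Since $\Phi$ is nondegenerate, the rank condition \eqref{e505} lets us, near any point of $C_\Phi$, use the implicit function theorem to solve the equations $\Phi_\theta(x,\theta)=0$. The cleanest route is to observe that because $d(\Phi_{\theta_1}),\dots,d(\Phi_{\theta_m})$ are linearly independent on $C_\Phi$, the functions $\Phi_{\theta_1},\dots,\Phi_{\theta_m}$ together with $n$ suitably chosen coordinate functions from among $x_1,\dots,x_n,\theta_1,\dots,\theta_m$ form a local coordinate system on $V$; the level set $C_\Phi$ is then cut out cleanly and inherits those $n$ chosen functions as local coordinates. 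Then $j_\Phi$ is a local diffeomorphism onto its image precisely when the composite map $C_\Phi\to\RR^{2n}_{(x,p)}\to\RR^n_x$ (projection to the $x$-factor) composed appropriately has the right rank — but actually the simplest argument is: the first $n$ components of $j_\Phi$ are just the restriction of $x$ to $C_\Phi$, and $p=\Phi_x(x,\theta)$ is then determined as a smooth function of the point of $C_\Phi$; so $j_\Phi$ is injective on a neighborhood iff $x|_{C_\Phi}$ is, which need not hold globally — hence "local diffeomorphism" and "self-intersections" in the statement. To get the local-diffeomorphism property rigorously one checks that $dj_\Phi$ is injective: a tangent vector to $C_\Phi$ killed by $dj_\Phi$ would have zero $dx$-component and zero $d(\Phi_x)$-component; combined with the defining relation $d\Phi_\theta=0$ along $C_\Phi$ and the nondegeneracy, this forces the vector to vanish.

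For the Lagrangian property, the key computation is to restrict the tautological situation: on $V$ (before restricting to $C_\Phi$) consider the 1-form $\Phi_x\,dx$. Its exterior derivative is $d(\Phi_x\,dx)=\Phi_{xx}\,dx\wedge dx+\Phi_{x\theta}\,d\theta\wedge dx=\Phi_{x\theta}\,d\theta\wedge dx$ (the $\Phi_{xx}$ term drops by symmetry). But $\Phi_x\,dx = d\Phi - \Phi_\theta\,d\theta$, so $d(\Phi_x\,dx) = -d\Phi_\theta\wedge d\theta$. Now restrict to $C_\Phi$: there $\Phi_\theta\equiv 0$, so $d\Phi_\theta$ restricted to $C_\Phi$ vanishes, hence $d(\Phi_x\,dx)|_{C_\Phi}=0$. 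Since $j_\Phi^*(p\,dx)=\Phi_x\,dx|_{C_\Phi}$ and $j_\Phi^*(dp\wedge dx)=d\,j_\Phi^*(p\,dx)$, we get $j_\Phi^*\om^2=0$ on $C_\Phi$. Because $j_\Phi$ is a local diffeomorphism onto $\La_\Phi$, this means $\om^2$ vanishes on the tangent spaces to $\La_\Phi$, which together with $\dim\La_\Phi=n$ is exactly the assertion that $\La_\Phi$ is a Lagrangian submanifold (locally; globally it may have self-intersections, as stated).

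I expect the only mild obstacle to be the bookkeeping in the injectivity-of-$dj_\Phi$ argument — making precise which linear combinations of the differentials vanish on $T_{(x,\theta)}C_\Phi$ and why nondegeneracy then forces a tangent vector in the kernel of $dj_\Phi$ to be zero. This is where one genuinely uses \eqref{e505} as opposed to merely \eqref{e506}. Everything else — the $d\Phi$ identity, the symmetry cancellation of $\Phi_{xx}$, the dimension count — is routine and follows the standard Hörmander construction of Lagrangian manifolds from nondegenerate phase functions; I would cite \cite{Hor6} for the general framework and present the computation in the compressed form above.
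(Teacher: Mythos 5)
Your proof takes essentially the same route as the paper: injectivity of $dj_\Phi$ via the full-rank condition \eqref{e505}, and vanishing of $\omega^2$ on $T\La_\Phi$ by restricting to $C_\Phi$ where $\Phi_\theta\equiv0$. Your Lagrangian computation, via the identity $\Phi_x\,dx=d\Phi-\Phi_\theta\,d\theta$ and hence $d(\Phi_x\,dx)=-d\Phi_\theta\wedge d\theta$, is a slightly cleaner packaging of the same symmetry cancellations the paper performs termwise, and the injectivity outline you sketch (zero $dx$-component forces $\eta=0$; then $\Phi_{x\theta}\xi=0$ from the $d\Phi_x$-component and $\Phi_{\theta\theta}\xi=0$ from tangency to $C_\Phi$, whence ${}^t\xi\begin{pmatrix}\Phi_{\theta x}&\Phi_{\theta\theta}\end{pmatrix}=0$ and \eqref{e505} gives $\xi=0$) is exactly the paper's argument once you fill in the bookkeeping you flagged.

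One incorrect side remark: you claim $j_\Phi$ is locally injective \emph{iff} $x|_{C_\Phi}$ is. The forward implication (``if $x|_{C_\Phi}$ is injective then so is $j_\Phi$'') is trivially true, but the converse is false: at a focal point of $\La_\Phi$ the projection $x|_{C_\Phi}$ degenerates while $j_\Phi$ remains a local diffeomorphism, the lost information being recovered from the $p=\Phi_x$ component. Indeed the whole point of the construction is to parametrize Lagrangian manifolds near caustics, where $x$-injectivity fails. This remark is not load-bearing in your proof, but it misattributes the reason for the ``possibly with self-intersections'' caveat: self-intersections occur when two \emph{distinct} points of $C_\Phi$ map to the \emph{same} $(x,p)$, not merely the same $x$.
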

\begin{proof}
Let $v={}^t(\eta,\xi)$ be a vector tangent to $C_\Phi$; thus,
$\Phi_{\theta x}\eta+\Phi_{\theta\theta}\xi=0$. Let
$j_{\Phi*}(v)\equiv(\eta,\Phi_{x\theta}\xi)=0$. Then
$\Phi_{\theta\theta}\xi=-\Phi_{\theta x}\eta=0$, and
${}^t\xi\begin{pmatrix} \Phi_{\theta x}(x,\theta) &
\Phi_{\theta\theta}(x,\theta) \\ \end{pmatrix}=0$, whence $\xi=0$
by \eqref{e505} and so $v=0$. This proves that $j_\Phi$ is a local
diffeomorphism onto the image of itself. Next,
\begin{multline*}
    dx\wedge dp=dx\wedge(\Phi_{x\theta}d\theta+\Phi_{xx}dx)
    =(\Phi_{\theta x}dx)\wedge d\theta
    =(\Phi_{\theta x}dx+\Phi_{\theta \theta}d\theta)\wedge d\theta
    \\=d(\Phi_\theta)\wedge d\theta=0
\end{multline*}
on the tangent space to $\La_\Phi$, because $\Phi_\theta=0$ on
$C_\Phi$ and the products $dx\wedge(\Phi_{xx}dx)=0$ and
$\Phi_{\theta \theta}d\theta \wedge d\theta$ are zero by the
antisymmetry of the exterior product.
\end{proof}
\begin{definition}\label{def-loc-det}
A nondegenerate phase function $\Phi(x,\theta)$ is called a
(\textit{local}) \textit{determining function} of a Lagrangian
manifold $\La\subset\RR^{2n}$ if $\La_\Phi\subset\La$.
\end{definition}
Under the conditions of Definition~\ref{def-loc-det}, $\La_\Phi$
does not have self-intersections and is (relatively) open in $\La$.
In what follows, we identify $C_\Phi$ with $\La_\Phi$ via the
mapping
\begin{equation}\label{e507a}
  j_\Phi\colon C_\Phi\lra\La_\Phi,
\end{equation}
so that functions (differential forms) on $C_\Phi$ are
automatically functions (differential forms) on $\La_\Phi$ and vice
versa. The following proposition provides a good example of this.
\begin{proposition}\label{apropo2}
The function $\tau=\Phi|_{C_\Phi}$ is an action on $\La_\Phi$,
i.e., satisfies
\begin{equation}\label{a-action}
    d\tau=p\,dx \big|_{\La_\Phi}.
\end{equation}
\end{proposition}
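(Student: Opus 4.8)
The plan is to verify the defining equation \eqref{a-action} by a direct computation on $C_\Phi$, exploiting the identification \eqref{e507a}. On $C_\Phi$ we have, by definition, $\tau = \Phi(x,\theta)|_{C_\Phi}$, where $(x,\theta)$ range over the manifold $C_\Phi$ cut out by $\Phi_\theta = 0$. First I would compute the exterior derivative of $\tau$ as a form on $C_\Phi$, treating $x$ and $\theta$ as (possibly dependent) coordinates restricted to $C_\Phi$:
\begin{equation*}
    d\tau = d\bl(\Phi|_{C_\Phi}\br)
          = \bl(\Phi_x\,dx + \Phi_\theta\,d\theta\br)\big|_{C_\Phi}.
\end{equation*}
The key observation is that the second term vanishes: on $C_\Phi$ one has $\Phi_\theta(x,\theta) = 0$ identically, so $\Phi_\theta\,d\theta$ restricts to zero on $C_\Phi$. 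Hence $d\tau = \Phi_x\,dx\big|_{C_\Phi}$.

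Next I would transport this identity to $\La_\Phi$ via the diffeomorphism $j_\Phi$ of \eqref{e507a}. Under this identification a point of $C_\Phi$ with coordinates $(x,\theta)$ corresponds to the point $(x,p)\in\La_\Phi$ with $p = \Phi_x(x,\theta)$. Therefore the form $\Phi_x\,dx$ on $C_\Phi$ is precisely the pullback $j_\Phi^*(p\,dx)$ of the canonical $1$-form $p\,dx$ on $\RR^{2n}_{(x,p)}$. Combining with the previous step, $d\tau = j_\Phi^*(p\,dx)$, which is exactly the assertion \eqref{a-action} once we regard $\tau$ and $p\,dx|_{\La_\Phi}$ as living on the same manifold $\La_\Phi \cong C_\Phi$. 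Since $\La_\Phi\subset\La$ is open by the remark following Definition~\ref{def-loc-det}, this shows $\tau$ solves the Pfaff equation \eqref{eq:02-eiko} locally on $\La$, so it is an action in the sense defined there.

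Honestly, there is no real obstacle here: the nondegeneracy hypothesis on $\Phi$ is not even needed for this particular statement (it was used only to guarantee that $C_\Phi$ is a smooth manifold and that $j_\Phi$ is a local diffeomorphism, both of which we are entitled to assume from the preceding propositions). The one point requiring a little care is notational rather than mathematical: one must be consistent about whether $d$ denotes the exterior derivative on $C_\Phi$ or the differential of the ambient function $\Phi$ on $V$, and the passage between them is exactly the restriction operation $(\cdot)|_{C_\Phi}$ which kills the $d\theta$-component because $\Phi_\theta\equiv 0$ on $C_\Phi$. So the proof is essentially a two-line computation, and I would present it as such.
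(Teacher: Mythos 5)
Your proof is correct and is essentially identical to the paper's: both compute $d\tau = d(\Phi|_{C_\Phi}) = (d\Phi)|_{C_\Phi} = (\Phi_x\,dx + \Phi_\theta\,d\theta)|_{C_\Phi}$ and then use $\Phi_\theta = 0$ on $C_\Phi$ together with $p = \Phi_x$ under the identification $j_\Phi$. The paper states it as a one-line chain of equalities, while you spell out the commutation of $d$ with restriction and the pullback $j_\Phi^*(p\,dx)$ explicitly, but the argument is the same.
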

\begin{proof}
One has
\begin{equation*}
    d\tau=d(\Phi|_{C_\Phi})=(d\Phi)|_{C_\Phi}=(\Phi_x\,dx+\Phi_\theta\,d\theta)|_{C_\Phi}
    =p\,dx|_{\La_\Phi},
\end{equation*}
because $\Phi_x=p$ and $\Phi_\theta=0$ on $C_\Phi$.
\end{proof}

\subsection{Fourier integrals}

Let $\Phi(x,\theta)$, $(x,\theta)\in
V\subset\RR_x^{n}\times\RR_{\theta}^m$,  be a nondegenerate phase
function, and let $a\in C_0^\infty(V)$ be a smooth compactly
supported function.
\begin{definition}\label{def-FI}
The function
\begin{equation}\label{FIO}
    \cI[\Phi,a](x,h)=\frac{e^{i\pi m/4}}{(2\pi h)^{m/2}}
    \int e^{\frac ih\Phi(x,\theta)}a(x,\theta)\,d\theta
\end{equation}
is called the \textit{Fourier integral with phase function $\Phi$
and amplitude $a$}.
\end{definition}
Note that $C_\Phi$ is the set of stationary points of the integral
\eqref{FIO}, and hence the asymptotics of $\cI[\Phi,a]$ as $h\to0$
depends only on the behavior of $a$ near $C_\Phi$. In particular,
we can assume that $a$ is supported in an arbitrarily small
neighborhood of~$C_\Phi$. Now if $a|_{C_\Phi}=0$, then $a$ can be
represented as a linear combination
$a=\sum_{j=1}^ma_j\Phi_{\theta_j}$ of the derivatives
$\Phi_{\theta_j}$ with coefficients $a_j\in C_0^\infty(V)$.
Integration by parts gives $\cI[\Phi,a]=ih\sum_{j=1}^m
\cI[\Phi,a_{j\theta_j}]$, and we see that the Fourier
integral~\eqref{FIO} modulo $O(h)$ depends only on the restriction
of the amplitude to $C_\Phi$. \textit{In what follows, we are only
interested in the leading term of the asymptotics, and accordingly
we only describe the restriction of the amplitude to $C_\Phi$\rom;
its continuation outside $C_\Phi$ can be chosen
arbitrarily.}\footnote{The subsequent terms of the asymptotics can
be studied as well, but we do not dwell on the topic.}

\subsection{Canonical operator as a Fourier integral}

Let $\La$ be a Lagrangian manifold in $\RR_{(x,p)}^{2n}$, and let
$(U,I)$, $I\subset\{1,\dotsc,n\}$, be a canonical chart on~$\La$.
In particular, $U$ is given by Eqs.~\eqref{loc-coor0}. The local
canonical operator in $(U,I)$ is given by \eqref{KO-1}, where
$\bar\cF_{p_\I\to x_\I}^{1/h}$ is the inverse $1/h$-Fourier
transform~\eqref{FT1} with respect to the variables $p_\I$. By
substituting~\eqref{FT1} into~\eqref{KO-1}, we see that the local
canonical operator has the form \eqref{FIO} with
\begin{equation}\label{PhP}
    \Phi(x,\theta)=\tau_{(U,I)}(\a(x_I,\theta))
    +\theta(x_\I-X_\I(x_I,\theta)),
\end{equation}
where we have denoted $p_\I$ by $\theta$ to conform in notation
with \eqref{FIO}. We have
\begin{equation*}
    d\Phi= P_I\,dx_I+\theta\,dX_\I+\theta(dx_\I
    -dX_\I)+(x_\I-X_\I)d\theta=
    P_I\,dx_I+\theta\,dx_\I+(x_\I-X_\I)d\theta
\end{equation*}
(the arguments $(x_I,\theta)$ are omitted),
$\Phi_x(x,\theta)=(P_I(x,\theta),\theta)$, and
$\Phi_\theta(x,\theta)=x_\I-X_\I(x_I,\theta)$. In particular,
$\Phi_{\theta x_\I}$ is the $\abs{\I}\times\abs{\I}$ identity
matrix, and so $\Phi$ is a nondegenerate phase function. Next,
$C_\Phi$ is given by the equations $x_\I=X_\I(x_I,\theta)$, and
$\La_\Phi$ is given by the equations $p_I=P_I(x_I,\theta)$,
$p_\I=\theta$, $x_\I=X_\I(x_I,\theta)$, or (eliminating the
variables $\theta$) by Eqs.~\eqref{loc-coor0}. We conclude that
$\La_\Phi=U\subset\La$, and hence the standard
representation~\eqref{KO-1} of the canonical operator in the chart
$(U,I)$ is none other than a special case of the general Fourier
integrals~\eqref{FIO}.

\subsection{Jacobians in Fourier integrals}

The amplitude of the Fourier integral \eqref{KO-1} representing the
local canonical operator has the form of the product
\begin{equation*}
    a(x,p_\I)=\frac{\ph(\a(x_I,p_\I))}{\sqrt{\cJ_I(\a(x_I,p_\I))}}.
\end{equation*}
The factor $1/\sqrt{\cJ_I(\a(x_I,p_\I))}$ plays an important role
when comparing local canonical operators in different canonical
charts on the same Lagrangian manifold. It is convenient to
introduce a similar factor in the amplitude of the general Fourier
integral \eqref{FIO}. Thus, let $\Phi(x,\theta)$, $(x,\theta)\in
V\subset\RR_x^{n}\times\RR_{\theta}^m$, be a nondegenerate phase
function, and let $d\mu$ be a measure on $\La_\Phi$. Using the
identification \eqref{e507a} of $\La_\Phi$ and $C_\Phi$, we can
treat $d\mu$ as a differential $n$-form on $C_\Phi$. Next, let
$\widetilde{d\mu}$ be an arbitrary differential $n$-form defined in
a neighborhood of $C_\Phi$ in $V$ such that
\begin{equation}\label{e-emb}
    \imath^*(\widetilde{d\mu})=d\mu,\qquad \text{where $\imath\colon C_\Phi\lra V$
    is the embedding.}
\end{equation}
The product $\widetilde{d\mu}
\wedge(-d\Phi_\theta)\overset{\operatorname{def}} =\widetilde{d\mu}
\wedge d(-\Phi_{\theta_1})\wedge\dotsm\wedge d(-\Phi_{\theta_m})$
is a differential form of maximum degree $n+m$, and hence
\begin{equation*}
\widetilde{d\mu} \wedge(-d\Phi_\theta)=F(x,\theta)\, dx\wedge
d\theta\equiv F(x,\theta)\, dx_1\wedge\dotsm\wedge dx_n\wedge
d\theta_1\wedge\dotsm\wedge
    d\theta_m
\end{equation*}
for some function $F(x,\theta)$. We write
\begin{equation}\label{e515}
    F(x,\theta)\equiv F[\Phi,d\mu](x,\theta)=\frac{\widetilde{d\mu} \wedge(-d\Phi_\theta)}
    {dx\wedge d\theta}.
\end{equation}
The restriction of $F[\Phi,d\mu](x,\theta)$ to $C_\Phi$ is
independent of the choice of the form $\wt{d\mu}$ satisfying
\eqref{e-emb}. Indeed, let $\xi_1,\dotsc,\xi_{n+m}\in\RR^{n+m}$ be
linearly independent vectors such that $\xi_1,\dotsc,\xi_n$ form a
basis in the tangent space to $C_\Phi$ at some point
$(x_*,\theta_*)$. Then
\begin{equation*}
    (\widetilde{d\mu}
    \wedge(-d\Phi_\theta))(\xi_1,\dotsc,\xi_{n+m})=
    -d\mu(\xi_1,\dotsc,\xi_n)d\Phi_\theta (\xi_{n+1},\dotsc,\xi_{n+m})
\end{equation*}
depends only on $d\mu$. Moreover, it is nonzero, because $d\mu$ is
nondegenerate on the tangent space to $C_\Phi$ and $d\Phi_\theta$
is nondegenerate in the transversal directions. From now on, we
consider Fourier integrals with amplitude
\begin{equation}\label{ampli}
    a=\ph\sqrt{F[\Phi,d\mu]},
\end{equation}
where $\ph$ is some function on $C_\Phi$ (or, equivalently, on
$\La_\Phi$).

\subsection{Fourier integral as the canonical operator}

Let $\Phi(x,\theta)$, $(x,\theta)\in
V\subset\RR_x^{n}\times\RR_{\theta}^m$, be a nondegenerate phase
function, and let $d\mu$ be a measure on $\La_\Phi$. We claim that
the Fourier integral $\cI[\Phi,\ph\sqrt{F[\Phi,d\mu]}]$ is none
other than the canonical operator on $\La_\Phi$ applied to the
function $\ph$,
\begin{equation}\label{ravenstvo}
    \cI[\Phi,\ph\sqrt{F[\Phi,d\mu]}]=K_{(\La_\Phi,d\mu)}^{1/h}\ph+O(h),
\end{equation}
for an appropriate choice of the action and arguments of Jacobians
on $\La_\Phi$. Without loss of generality, we assume that
$\La_\Phi$ is covered by a single canonical chart $(U,I)$,
$U=\La_\Phi$.
\begin{theorem}\label{atheorem1}
Let the canonical operator $K_{(\La_\Phi,d\mu)}^{1/h}$ be defined
by formula~\eqref{KO-1}, where the eikonal $\tau_{(U,I)}$ coincides
with the eikonal $\tau$ defined in Proposition~\rom{\ref{apropo2}}
and the index $m_{(U,I)}$ is chosen according to the rule
\begin{equation}\label{argJ}
    m_{(U,I)}=-\frac1\pi\arg F[\Phi,d\mu]
    -\si_-\left[\begin{pmatrix}
      -\Phi_{\theta\theta} & -\Phi_{\theta x_\I} \\
      -\Phi_{x_\I\theta} & -\Phi_{x_\I x_\I} \\
    \end{pmatrix}\right]+\abs{\I},
\end{equation}
$\si_-(A)$ being the number of negative eigenvalues of a symmetric
matrix $A$. Then relation~\eqref{ravenstvo} holds.
\end{theorem}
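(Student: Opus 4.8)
The plan is to evaluate the Fourier integral $\cI[\Phi,\ph\sqrt{F[\Phi,d\mu]}]$ asymptotically as $h\to0$ by the stationary phase method and to match the result with the expression \eqref{KO-1} for $K_{(\La_\Phi,d\mu)}^{1/h}\ph$ in the canonical chart $(U,I)$, $U=\La_\Phi$. Since the integrand in \eqref{FIO} is rapidly oscillating in $\theta$ with phase $\Phi(x,\theta)$, and the stationary points in $\theta$ are exactly the points of $C_\Phi$ (where $\Phi_\theta=0$), the leading asymptotics is governed by the Hessian $\Phi_{\theta\theta}$ restricted to $C_\Phi$. However, to compare with \eqref{KO-1}, which is itself the Fourier integral with phase \eqref{PhP} and variables $\theta=p_\I$, I will not apply stationary phase blindly; instead I will perform an explicit change of the integration variables $\theta$ in $\cI[\Phi,\ph\sqrt{F[\Phi,d\mu]}]$ that brings $\Phi(x,\theta)$ into the special form \eqref{PhP}, i.e., $\Phi\mapsto\wt\tau(x_I,p_\I)+p_\I(x_\I-\wt X_\I)$, possibly after also splitting off those $\theta$-variables that are ``inactive'' on $C_\Phi$ and integrating them out by stationary phase. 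This is exactly the equivalence of nondegenerate phase functions defining the same piece of $\La$ (Hörmander's theorem on the local normal form of a Fourier integral distribution), specialized to the setting at hand.

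Concretely, the first step is to note that, since $\La_\Phi\subset\La$ is diffeomorphically parametrized by $(x_I,p_\I)$ via the canonical chart, the map $(x,\theta)\mapsto(x_I,p_\I)=(x_I,\Phi_{x_\I}(x,\theta))$ restricted to a neighborhood of $C_\Phi$ in $V$ is a submersion whose fibers are transversal to $C_\Phi$; one uses this, together with the nondegeneracy condition \eqref{e505}, to produce new fiber coordinates $(\eta',\eta'')$ replacing $\theta$, where $\eta'$ is identified (on $C_\Phi$) with $p_\I$ and $\eta''$ parametrizes the remaining transversal directions, so that in the $(\eta',\eta'')$-variables the phase becomes $\Phi=\wt\Phi(x,\eta')+\frac12\langle Q(x,\eta')\eta'',\eta''\rangle+O(|\eta''|^3)$ with $Q$ nondegenerate. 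The second step is to apply the stationary phase theorem (the one reproduced, as the paper says, in Section~\ref{s6}) in the $\eta''$-variables alone, which produces the factor $(2\pi h)^{(m-\abs{\I})/2}\,|\det Q|^{-1/2}\,e^{i\frac\pi4\operatorname{sgn}Q}$ and leaves a Fourier integral in $\eta'\sim p_\I$ with phase $\wt\Phi$ of the form \eqref{PhP}; here $\operatorname{sgn}Q=\abs{I''}-2\si_-(Q)$ where $Q$ is (up to the reduction) the matrix $\begin{pmatrix}-\Phi_{\theta\theta}&-\Phi_{\theta x_\I}\\-\Phi_{x_\I\theta}&-\Phi_{x_\I x_\I}\end{pmatrix}$ appearing in \eqref{argJ}, which is why that matrix governs the phase shift. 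The third step is a bookkeeping computation: tracking how the Jacobian factor $\sqrt{F[\Phi,d\mu]}$ transforms under the change of variables $\theta\to(\eta',\eta'')$ and under the $\eta''$-integration, and checking that the accumulated constant is precisely $1/\sqrt{\cJ_I}$ (with the correct branch), while the accumulated phase exponent is exactly $-i\pi m_{(U,I)}/2$ with $m_{(U,I)}$ as in \eqref{argJ}. The identity $\cJ_I = \dfrac{dX_I\wedge dP_\I}{d\mu}$ from \eqref{JI} has to be expressed in terms of $F[\Phi,d\mu]=\dfrac{\wt{d\mu}\wedge(-d\Phi_\theta)}{dx\wedge d\theta}$ and the Jacobian of the reduction; this is the linear-algebra heart of the matching.

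The main obstacle is the careful synchronization of the three phase contributions: (a) the argument of $F[\Phi,d\mu]$, (b) the stationary-phase signature $\operatorname{sgn}Q$, equivalently $\abs I - 2\si_-$ of the matrix in \eqref{argJ}, and (c) the integer $\abs\I$ and the factor $e^{i\pi m/4}$ built into the definition \eqref{FIO} of $\cI$. Each of these separately is branch-dependent, and only the combination \eqref{argJ} produces a well-defined integer $m_{(U,I)}$ equal to the one the canonical operator requires; verifying this requires one to fix, once and for all, continuous branches of $\arg\cJ_I$ (via a path $\ga$ from the central point, as in Sec.~\ref{s14}) and of $\arg F[\Phi,d\mu]$ along the same path, and to show the discrepancy is exactly the signature term. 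A subsidiary technical point is justifying the reduction to the normal form \eqref{PhP} globally on $U$ (not merely near one stationary point), which is handled by the assumption that $U$ is covered by a single canonical chart together with a partition-of-unity argument in $\theta$; and one must check that different continuations $\wt{d\mu}$ of $d\mu$ off $C_\Phi$ give the same restriction of $F$, which is already noted in the text before \eqref{ampli}. I would relegate all the bookkeeping of (a)–(c) to a lemma on the behavior of $F[\Phi,d\mu]$ under the two elementary operations (linear change of fiber variable; Gaussian integration in a subset of fiber variables), after which \eqref{ravenstvo} with the index rule \eqref{argJ} follows by composing these operations and comparing with \eqref{KO-1}.
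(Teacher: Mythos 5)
Your route is correct, but it is genuinely different from the one taken in the paper. You propose to transform $\cI[\Phi,\ph\sqrt{F[\Phi,d\mu]}]$ directly into the canonical form by a fiber-preserving change of variables $\theta\mapsto(\eta',\eta'')$ bringing $\Phi$ to the split normal form $\wt\Phi(x,\eta')+\tfrac12\langle Q\eta'',\eta''\rangle$, then Gaussian-integrate the $m-\abs\I$ variables $\eta''$ by stationary phase and identify the surviving integral in $\eta'\sim p_\I$ with \eqref{KO-1}. The paper instead applies the forward transform $\cF_{x_\I\to p_\I}^{1/h}$ to both sides of \eqref{ravenstvo}, which unwraps the inverse Fourier transform built into \eqref{KO-1} and converts the claim into the scalar identity \eqref{shlyapa}, proved by a single stationary-phase computation in the $m+\abs\I$ joint variables $(\theta,x_\I)$. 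The two routes are dual: you shrink the fiber, the paper enlarges it. Your scheme has the virtue of presenting the result as an instance of the H\"ormander equivalence of nondegenerate phase functions cutting out the same Lagrangian piece, but it incurs two costs the paper sidesteps: (i) the Morse/Malgrange-type normal-form lemma for phase functions must be invoked, and (ii) the stationary-phase index you obtain is $\si_-(-Q)$, a matrix of size $m-\abs\I$, whereas \eqref{argJ} features $\si_-$ of a $(m+\abs\I)\times(m+\abs\I)$ Hessian, so you must still prove the block-matrix signature identity that $\si_-$ of the large Hessian equals $\si_-(-Q)+\abs\I$ (the extra $\abs\I$ coming from the $\bl(\begin{smallmatrix}A & I\\ I & 0\end{smallmatrix}\br)$-block having signature zero), which then exactly cancels the $+\abs\I$ in \eqref{argJ}; your phrase ``up to the reduction'' is glossing this over. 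The paper also avoids your ``Jacobian bookkeeping'' under the change of variables by the clever choice of extension $\wt{d\mu}=\mu_I\,dx_I\wedge d(\Phi_{x_\I})$, which yields the identity \eqref{crucial} and makes the stationary-phase determinant cancel $F[\Phi,d\mu]$ automatically; in your scheme the same cancellation must be distributed across the change-of-variables and the Gaussian steps. Both routes are valid; the paper's is more economical, yours is closer to the standard FIO normal-form argument.
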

\begin{corollary}
Two phase functions are equivalent \rom(i.e., define the same space
of Fourier integrals\rom) if and only if the corresponding
Lagrangian manifolds are the same.
\end{corollary}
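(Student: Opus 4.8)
The plan is to establish both directions of the equivalence by exploiting Theorem~\ref{atheorem1}, which identifies every Fourier integral $\cI[\Phi,\ph\sqrt{F[\Phi,d\mu]}]$ with the canonical operator $K_{(\La_\Phi,d\mu)}^{1/h}\ph$ modulo $O(h)$. First I would make precise the notion of equivalence: two nondegenerate phase functions $\Phi_1(x,\theta^{(1)})$ and $\Phi_2(x,\theta^{(2)})$ are equivalent if, locally, the classes of functions $\{\cI[\Phi_1,a]\colon a\in C_0^\infty\}$ and $\{\cI[\Phi_2,a]\colon a\in C_0^\infty\}$ coincide modulo $O(h)$ (equivalently, modulo $O(h^\infty)$ if one tracks all the lower-order amplitudes via the integration-by-parts argument already sketched in the text). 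The easy direction is that equivalent phase functions have the same Lagrangian manifold: if the spaces of Fourier integrals coincide, then in particular the associated oscillatory asymptotics have the same stationary-phase support, and the set of points $(x,\Phi_x)$ over $C_\Phi$ is exactly the singular support/wavefront-type datum read off from the leading asymptotics; since by the Proposition preceding Definition~\ref{def-loc-det} the map $j_\Phi$ has image $\La_\Phi$, equality of the function spaces forces $\La_{\Phi_1}=\La_{\Phi_2}$.

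For the converse — the substantive direction — suppose $\La_{\Phi_1}=\La_{\Phi_2}=:\La$, and fix any measure $d\mu$ on $\La$. By Theorem~\ref{atheorem1}, for $i=1,2$ and any $\ph\in C_0^\infty(\La)$ one has
\begin{equation*}
  \cI[\Phi_i,\ph\sqrt{F[\Phi_i,d\mu]}]
  = K_{(\La,d\mu)}^{1/h}\ph + O(h),
\end{equation*}
where the canonical operator on the right is the \emph{same} object in both cases, provided one checks that the eikonal and Maslov-index normalizations imposed in Theorem~\ref{atheorem1} agree. The eikonal is handled by Proposition~\ref{apropo2}: both $\Phi_1|_{C_{\Phi_1}}$ and $\Phi_2|_{C_{\Phi_2}}$ solve the Pfaff equation $d\tau = p\,dx|_\La$, hence differ by a constant, which can be absorbed into the choice of representative of $\Phi_2$ (adding a constant to a phase function does not change the equivalence class). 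With this normalization fixed, the index rule \eqref{argJ} must be shown to produce the same integer $m_{(U,I)}$ for both phase functions; this reduces to the statement that the combination $-\frac1\pi\arg F[\Phi,d\mu] - \si_-(\cdot) + \abs{\I}$ is an invariant of $\La$ (with its chosen canonical chart and central point) and does not depend on the particular fibration $\theta$ used to present $\La$ as $\La_\Phi$. Granting that, the displayed identity shows that each space of Fourier integrals equals $\{K_{(\La,d\mu)}^{1/h}\ph + O(h)\colon \ph\in C_0^\infty(\La)\}$, which is manifestly independent of $i$; since $\sqrt{F[\Phi_i,d\mu]}$ is a nonvanishing factor, multiplying amplitudes by it is a bijection on amplitude spaces, so the raw Fourier-integral spaces $\{\cI[\Phi_i,a]\}$ coincide as well.

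The main obstacle I expect is precisely the index-invariance step: verifying that the right-hand side of \eqref{argJ} depends only on $\La$ (together with the common eikonal and the fixed canonical chart $(U,I)$) and not on the number $m$ of fiber variables or on the specific nondegenerate $\Phi$ realizing $\La$. The natural tactic is to recall that any two nondegenerate phase functions defining the same $\La$ are related, up to addition of a nondegenerate quadratic form in auxiliary variables and a diffeomorphism in $\theta$, by the classical Hörmander equivalence of phase functions (\cite{Hor6}); one then checks that (i) adding a quadratic form $\pm\theta_{m+1}^2$ shifts $\si_-$ by the signature contribution of that form while simultaneously shifting $\arg F$ by the matching amount coming from the extra factor in $F[\Phi,d\mu]$ and changing $m$ so that the prefactor $(2\pi h)^{-m/2}e^{i\pi m/4}$ in \eqref{FIO} compensates, and (ii) a $\theta$-diffeomorphism multiplies $F$ by a positive Jacobian, leaving $\arg F$ unchanged. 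Assembling these cancellations shows $m_{(U,I)}$ is well defined, and the theorem follows. The remaining bookkeeping — checking that the $O(h)$ (or $O(h^\infty)$) error terms are genuinely in the equivalence class and not merely asymptotically close — is routine given the integration-by-parts reduction already recorded after Definition~\ref{def-FI}.
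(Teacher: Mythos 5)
Your core argument---applying Theorem~\ref{atheorem1} to identify the space of Fourier integrals with the space $\{K_{(\La_\Phi,d\mu)}^{1/h}\ph+O(h):\ph\in C_0^\infty(\La_\Phi)\}$, which depends only on $\La_\Phi$---is exactly the paper's one-sentence proof. The ``main obstacle'' you flag, however, is not really one: if two phase functions with the same $\La$ produce different values of $m_{(U,I)}$ under \eqref{argJ}, or eikonals differing by an additive constant, the resulting canonical operators differ only by an overall unimodular scalar factor, which is absorbed into $\ph$ when one compares \emph{spaces} of Fourier integrals rather than individual functions. So the index-invariance step is unnecessary. Worse, the tool you propose to settle it---H\"ormander's stable equivalence of phase functions (modulo $\theta$-diffeomorphisms and addition of nondegenerate quadratic forms)---is essentially the classical statement that this corollary is meant to recover as a byproduct of the canonical-operator identification, so invoking it here is circular in spirit and defeats the purpose of Theorem~\ref{atheorem1}. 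Delete that paragraph and the remaining argument (including the correct remark that multiplying amplitudes by the nonvanishing factor $\sqrt{F[\Phi_i,d\mu]}$ is a bijection) is sound and coincides with the paper's.
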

Indeed, Theorem~\ref{atheorem1} reduces an arbitrary Fourier
integral to the canonical operator on the corresponding Lagrangian
manifold.
\begin{proof}[Proof \rm of Theorem~\ref{atheorem1}]
Let us apply the $1/h$-Fourier transform from the variables $x_\I$
to the variables $p_\I$ (see Sec.~\ref{s6}) to
Eq.~\eqref{ravenstvo}. We see that it suffices to prove that
\begin{multline}\label{shlyapa}
    \frac{e^{i\pi(m-\abs{\I})/4}}{(2\pi h)^{(m+\abs{\I})/2}}
    \int e^{\frac ih[\Phi(x,\theta)-p_\I
    x_\I]}\ph(x,\theta)\sqrt{F[\Phi,d\mu](x,\theta)}\,d\theta\,dx_\I
    \\= e^{\frac ih[\tau(\a(x_I,p_\I))-p_\I X_\I(x_I,p_\I)]}
    \frac{\ph(\a(x_I,p_\I))}{\sqrt{\cJ_I(\a(x_I,p_\I))}} +O(h).
\end{multline}
To this end, we use the stationary phase method (see
Theorem~\ref{spm} below). The stationary point equations for the
phase function
\begin{equation*}
    \Psi(x,\theta,p_\I)=\Phi(x,\theta)-p_\I x_\I
\end{equation*}
of the integral on the left-hand side in~\eqref{shlyapa} read
\begin{equation}\label{stpeq}
    \Phi_\theta(x,\theta)=0,\qquad \Phi_{x_\I}(x,\theta)-p_\I=0.
\end{equation}
In particular, if a point $(x_\I,\theta)$ is a stationary point of
the integral for given $(x_I,p_\I)$, then $(x,\theta)\in\La_\Phi$.
Let us compute $F[\Phi,d\mu](x,\theta)$ at the stationary points.
We can take
\begin{equation*}
    \wt{d\mu}=\mu_I(x_I,\Phi_{x_\I}(x,\theta))\, dx_I\wedge
    d(\Phi_{x_\I}(x,\theta)), \quad
    \mu_I(x_I,p_\I)=\frac1{\cJ_I(\a(x_I,p_\I))},
\end{equation*}
and so \eqref{e515} gives (all computations are carried out for
$(x,\theta)\in C_\Phi$)
\begin{equation}\label{crucial}
\begin{split}
    F[\Phi,d\mu]&=\mu_I \frac{dx_I\wedge
    d\Phi_{x_\I}\wedge(-d\Phi_\theta)}
    {dx\wedge d\theta}=(-1)^{\abs{\I}}\mu_I \frac{dx_I\wedge
    d'(-\Phi_{x_\I})\wedge(-d'\Phi_\theta)}
    {dx_I\wedge dx_\I\wedge d\theta}\\
    &=(-1)^{\abs{\I}}\mu_I \frac{d'(-\Phi_{x_\I})\wedge(-d'\Phi_\theta)}
    {dx_\I\wedge d\theta}
    =(-1)^{\abs{\I}}\mu_I \det\begin{pmatrix}
      -\Phi_{\theta\theta} & -\Phi_{\theta x_\I} \\
      -\Phi_{x_\I\theta} & -\Phi_{x_\I x_\I} \\
    \end{pmatrix}.
\end{split}
\end{equation}
Here $d'$ is the differential with respect to the variables
$(x_\I,\theta)$, the variables $x_I$ being treated as parameters.
Since $F[\Phi,d\mu]\ne0$ on $C_\Phi$, we see that the determinant
is nonzero and the stationary point equations~\eqref{stpeq} are
nondegenerate, so that their solution is given by smooth functions
$x_\I=x_\I(x_I,p_\I)$, $\theta=\theta(x_I,p_\I)$. Since the point
$(x_I,x_\I(x_I,p_\I),\theta(x_I,p_\I))$ lies in $C_\Phi$, it
follows that
$$
(x_I,x_\I(x_I,p_\I),\Phi_x(x_I,x_\I(x_I,p_\I),\theta(x_I,p_\I)))
\in\La_\Phi.
$$
By the second equation in~\eqref{stpeq}, we can
replace $\Phi_{x_\I}$ by $p_\I$ here and obtain
\begin{equation*}
    (x_I,x_\I(x_I,p_\I),\Phi_{x_I}(x_I,x_\I(x_I,p_\I),\theta(x_I,p_\I)),p_\I)
\in\La_\Phi,
\end{equation*}
or, in view of Eqs.~\eqref{loc-coor0},
\begin{equation*}
    x_\I (x_I,p_\I)=X_\I(x_I,p_\I),\qquad
    \Phi_{x_I}(x_I,x_\I(x_I,p_\I),\theta(x_I,p_\I))=
    P_I(x_I,p_\I).
\end{equation*}
Thus, the point $(x_I,x_\I(x_I,p_\I),\theta(x_I,p_\I))\in C_\Phi$
corresponds via the mapping $j_\Phi$ to the point
$(x_I,X_\I(x_I,p_\I),P_I(x_I,p_\I),p_\I)\in\La_\Phi$, and
accordingly (see Proposition~\ref{apropo2}) we have
$\Phi(x_I,x_\I(x_I,p_\I),\theta(x_I,p_\I))=\tau(\a(x_I,p_\I))$. The
phase function at the stationary point is
\begin{multline*}
    \Psi(x_I,x_\I(x_I,p_\I),\theta(x_I,p_\I),p_\I)
    =\Phi(x_I,x_\I(x_I,p_\I),\theta(x_I,p_\I))-p_\I X_\I(x_I,p_\I)
    \\=\tau(\a(x_I,p_\I))-p_\I X_\I(x_I,p_\I).
\end{multline*}
Now we apply Theorem~\ref{spm} and obtain
\begin{multline}\label{shlyapa1}
    \frac{e^{i\pi(m-\abs{\I})/4}}{(2\pi h)^{(m+\abs{\I})/2}}
    \int e^{\frac ih(\Phi(x,\theta)-p_\I
    x_\I)}\ph(x,\theta)\sqrt{F[\Phi,d\mu](x,\theta)}\,d\theta\,dx_\I
    \\= e^{-i\pi\abs{\I}/2} e^{\frac
    ih[\tau(\a(x_I,p_\I))-p_\I X_\I(x_I,p_\I)]}
    \left[\ph(x,\theta)\sqrt{\frac{F[\Phi,d\mu](x,\theta)}{\det\begin{pmatrix}
      -\Phi_{\theta\theta} & -\Phi_{\theta x_\I} \\
      -\Phi_{x_\I\theta} & -\Phi_{x_\I x_\I} \\
    \end{pmatrix}}}\right]+O(h),
\end{multline}
where the expression in square brackets is taken at the stationary
point and the argument of the determinant is chosen as indicated in
Theorem~\ref{spm}. Now we take into account \eqref{argJ},
\eqref{crucial}, and the fact that $\mu_I(x_I,p_\I)=
\cJ(\a(x_I,p_\I))^{-1}$ and arrive at \eqref{shlyapa}. The proof of
the theorem is complete.
\end{proof}

\section{Auxiliary information}\label{s6}

\subsection{Notation}

All vectors are understood as column vectors. If $\xi$ and $\eta$
are $n$-vectors, then we write $\langle \xi,\eta\rangle$ for the
bilinear form $\langle \xi,\eta\rangle=\sum_{j=1}^n \xi_j\eta_j.$
Sometimes, however, we just write $\xi\eta$ instead. Partial
derivatives are denoted by subscripts; for example,
$\Phi_x=\pa\Phi/\pa x$.

If $I$ is a subset of $\{1,\dotsc,n\}$, then by $\I$ we denote the
complementary subset $\I=\{1,\dotsc,n\}\setminus I$. By $\abs{\I}$
we denote the number of elements in $\I$. The $\abs{I}$-vector with
components $x_j$, $j\in I$, is denoted by $x_I$. The product
$dx_I\wedge dp_\I$ is understood as the exterior product of all
differentials $dx_j$, $j\in I$, and $dp_j$, $j\in\I$, all factors
being arranged in ascending order of subscripts from $1$ to $n$.
Similar intuitively clear notation is used as well. Next, $A_{I\I}$
is an $\abs{I}\times\abs{\I}$ matrix with entries $A_{jk}$, $j\in
I$, $k\in\I$. For example, if $I=\{1,3\}$ and $\I=\{2,4\}$, then
\begin{gather*}
    dx_I=dx_1\wedge dx_3,\quad
    dp_\I=dp_2\wedge dp_4,\quad
    dx_I\wedge dp_\I=dx_1\wedge dp_2\wedge dx_3\wedge dp_4,\\
    (x_I,p_\I)=(x_1,p_2,x_3,p_4),\qquad
    \pd{S_I}{x_I\pa p_\I}=\begin{pmatrix}
      \pd{S_I}{x_1\pa p_2} & \pd{S_I}{x_1\pa p_4} \\
      \pd{S_I}{x_3\pa p_2} & \pd{S_I}{x_3\pa p_4} \\
    \end{pmatrix}.
\end{gather*}

\subsection{Fourier transform}

Recall that the $1/h$-Fourier transform from the variables $x_\I$
to the variables $p_\I$ and the inverse transform are defined as
\begin{align}\label{FT0}
    [\cF_{x_\I\to p_\I}^{1/h}u](x_I,p_\I)
    &=\frac{e^{-i\pi\abs{\I}/4}}{(2\pi h)^{\abs{\I}/2}}
    \int e^{-\frac ih p_\I x_\I}u(x)\,dx_\I,
\\ \label{FT1}
   [\bar\cF_{p_\I\to x_\I}^{1/h}v](x)
    &=\frac{e^{i\pi\abs{\I}/4}}{(2\pi h)^{\abs{\I}/2}}
    \int e^{\frac ih p_\I x_\I}v(x_I,p_\I)\,dp_\I.
\end{align}

\subsection{Asymptotics of oscillatory integrals} Here we reproduce the statement of the theorem on
the stationary phase method used in the preceding subsection.

\begin{theorem}[e.g., see~{\cite{Fdr1,MSS1}}]\label{spm}
Let the phase function $\Phi(x,\theta)$ have a unique stationary
point $\theta=\theta(x)$ on the support of the amplitude, and
assume that this stationary point is nondegenerate,
$\det\Phi_{\theta\theta}(x,\theta(x))\ne0$. Then the integral
\eqref{FIO} has the asymptotics
\begin{equation}\label{sfm1}
    I[\Phi,a](x)
    =\frac{e^{\frac
    ih\Phi(x,\theta(x))}a(x,\theta(x))}
    {\sqrt{\det(-\Phi_{\theta\theta}(x,\theta(x)))}}+O(h),
\end{equation}
where the branch of the square root in the denominator is chosen
according to the rule
\begin{equation}\label{sfm2}
   \arg\det(-\Phi_{\theta\theta}(x,\theta(x)))
   =-\pi\si_-(-\Phi_{\theta\theta}(x,\theta(x))).
\end{equation}
Here $\si_-(A)$ is the negative index of inertia \rom(the number of
negative eigenvalues\rom) of a self-adjoint matrix~$A$.
\end{theorem}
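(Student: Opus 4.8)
The plan is to follow the classical \emph{localize--normalize--evaluate} scheme; the statement is well known, so I only sketch the steps. First I would take a smooth partition of unity on the support of $a$ and write $a=a_0+a_1$, with $a_0$ supported in an arbitrarily small neighborhood of the stationary point $\theta=\theta(x)$ and $a_1$ vanishing near it. Since $\Phi_\theta\ne0$ on the support of $a_1$, repeated integration by parts in $\theta$ against the non-vanishing gradient $\Phi_\theta$ (each step gaining a factor $O(h)$) shows that $\cI[\Phi,a_1](x)=O(h^N)$ for every $N$; hence only an arbitrarily small neighborhood of $\theta(x)$ contributes to the leading term, and the dependence on $x$ can be kept smooth throughout.

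Next I would bring $\Phi$ to quadratic normal form by the Morse lemma with parameters: shrinking the neighborhood if necessary, there is a diffeomorphism $\theta=\psi(x,y)$ depending smoothly on $x$, with $\psi(x,0)=\theta(x)$ and
\[
  \Phi(x,\psi(x,y))=\Phi(x,\theta(x))+\tfrac12\langle\La y,y\rangle,\qquad
  \La=\operatorname{diag}(\e_1,\dots,\e_m),\quad \e_j=\pm1.
\]
Substituting $\theta=\psi(x,y)$ turns the integral into $\int e^{\frac ih[\Phi(x,\theta(x))+\frac12\langle\La y,y\rangle]}b(x,y)\,dy$ with $b(x,y)=a_0(x,\psi(x,y))\abs{\det\psi_y(x,y)}$. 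Taylor-expanding $b$ about $y=0$ and noting that the odd-order terms and the remainder contribute only $O(h)$ after Gaussian integration, the leading term factors into one-dimensional Fresnel integrals,
\[
  b(x,0)\prod_{j=1}^m\int_{\RR}e^{\frac{i\e_j}{2h}y_j^2}\,dy_j
  =b(x,0)\,(2\pi h)^{m/2}\,e^{\frac{i\pi}{4}(\si_+-\si_-)},
\]
where $\si_\pm$ denotes the number of indices $j$ with $\e_j=\pm1$ and $\int_{\RR}e^{\pm iy^2/(2h)}\,dy=\sqrt{2\pi h}\,e^{\pm i\pi/4}$.

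Finally I would collect the constants. The prefactor $e^{i\pi m/4}(2\pi h)^{-m/2}$ in \eqref{FIO}, together with $m=\si_++\si_-$, produces the total phase $e^{i\pi m/4}e^{\frac{i\pi}{4}(\si_+-\si_-)}=e^{i\pi\si_+/2}$, while $b(x,0)=\abs{\det\psi_y(x,0)}\,a(x,\theta(x))$. To match \eqref{sfm1}--\eqref{sfm2} it remains to check that $\abs{\det\psi_y(x,0)}\,e^{i\pi\si_+/2}=1/\sqrt{\det(-\Phi_{\theta\theta}(x,\theta(x)))}$ for the prescribed branch. Differentiating the normal form twice in $y$ at $y=0$ and using $\Phi_\theta(x,\theta(x))=0$ gives $\La={}^{t}\psi_y(x,0)\,\Phi_{\theta\theta}(x,\theta(x))\,\psi_y(x,0)$; hence $\si_\pm$ are the inertia indices of $\Phi_{\theta\theta}(x,\theta(x))$ by Sylvester's law, so $\si_-(-\Phi_{\theta\theta})=\si_+$ and $\abs{\det(-\Phi_{\theta\theta})}=\abs{\det\La}/(\det\psi_y(x,0))^2=1/(\det\psi_y(x,0))^2$. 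By \eqref{sfm2} this yields $\sqrt{\det(-\Phi_{\theta\theta})}=e^{-i\pi\si_+/2}/\abs{\det\psi_y(x,0)}$, which is exactly the required reciprocal, and \eqref{sfm1} follows. The only genuinely delicate points are the smooth dependence on the parameter $x$ in the Morse lemma and the uniform control of the Taylor remainder inside the oscillatory integral; of these, handling the parametrized Morse lemma is the step I expect to demand the most care.
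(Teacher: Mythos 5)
The paper does not prove Theorem~\ref{spm}: it is reproduced in the auxiliary Section~\ref{s6} with a citation to \cite{Fdr1,MSS1}, so there is no ``paper's proof'' to compare against.

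Your argument is the standard localize--Morse-lemma--Fresnel proof, and it is correct. The localization by nonstationary-phase integration by parts, the parametrized Morse normal form $\Phi(x,\psi(x,y))=\Phi(x,\theta(x))+\tfrac12\langle\La y,y\rangle$, the reduction to products of one-dimensional Fresnel integrals giving the phase $e^{i\pi(\si_+-\si_-)/4}$, and the final bookkeeping are all right. In particular the identity $\La={}^t\psi_y(x,0)\,\Phi_{\theta\theta}(x,\theta(x))\,\psi_y(x,0)$ (obtained by differentiating the normal form twice and using $\Phi_\theta=0$ at the critical point) correctly lets you transfer both the inertia indices (by Sylvester) and the modulus $\abs{\det\Phi_{\theta\theta}}=1/(\det\psi_y(x,0))^2$, and your unwinding of the branch convention \eqref{sfm2} reproduces the right phase $e^{i\pi\si_+/2}$. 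Two small points worth tightening if you were to write this out in full: (i) the Fresnel integrals $\prod_j\int_{\RR}e^{i\e_j y_j^2/(2h)}\,dy_j$ are over all of $\RR^m$ while $b(x,\cdot)$ is compactly supported, so one should insert a cutoff $\chi$ near $y=0$ and absorb the tail by another nonstationary-phase argument; (ii) the $O(h)$ remainder is most cleanly obtained not by ``odd terms vanish + Gaussian scaling'' but by writing $b(x,y)=b(x,0)+\sum_j y_j c_j(x,y)$ and integrating by parts once in each $y_j$, which gains the factor $h$ uniformly. These are the same ``delicate points'' you flag yourself, so the plan is sound.
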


\end{document}